\documentclass[11pt, twocolumn]{article}
\usepackage{verbatim}
\usepackage{fullpage}

\usepackage{amssymb}
\usepackage{framed}
\usepackage{epsfig}
\usepackage{amsmath, amsthm, amssymb}

\newtheorem{theorem}{Theorem}[section]
\newtheorem{definition}[theorem]{Definition}

\newtheorem{lemma}[theorem]{Lemma}

{\mbox{}\\[5pt]}



\newcommand{\prob}[2]{\ensuremath \mathsf{Pr}_{#1}\left[ #2 \right] }

\newcommand{\ignore}[1]{}
\newcommand{\eat}[1]{}

\DeclareGraphicsRule{.tif}{png}{.png}{`convert #1 `dirname
#1`/`basename #1 .tif`.png}

\title{Betweenness Centrality : Algorithms and Lower Bounds}
\author{Shiva Kintali\footnote{College of Computing,
Georgia Institute of Technology, Atlanta, GA-30332. Email :
{\em{kintali@cc.gatech.edu}} }}
\date{}
\begin{document}
\maketitle

\begin{abstract}

One of the most fundamental problems in large-scale network analysis is to determine the importance of a particular node in a network. Betweenness centrality is the most widely used metric to measure the importance of a node in a network. In this paper, we present {\em{a randomized parallel algorithm}} and {\em{an algebraic method}} for computing betweenness centrality of all nodes in a network. We prove that any path-comparison based algorithm cannot compute betweenness in less than $O(nm)$ time. \\

{\bf{Keywords}}: all-pairs shortest paths, betweenness centrality, lower bounds, parallel graph algorithms, social networks.

\end{abstract}

\section{Introduction}

One of the most fundamental problems in large-scale network analysis is to determine the {\em{importance}} of a particular node (or an edge) in a network. For example, in social networks we wish to know agents that have very short connections to large portions of the population. In communication networks we wish to know the links that carry a lot of traffic, ISPs that attract a lot of business, links that, if disconnected, decrease network performance dramatically, and so on. A particular way to measure the importance of network elements (nodes or edges) is using {\em{centrality}} metrics such as closeness centrality \cite{closeness}, graph centrality \cite{graph}, stress centrality \cite{stress} and betweenness centrality ({\cite{freeman}}, \cite{rush}). An important application of centrality arises in the study epidemic phenomena in networks when an infectious disease or a computer virus is disseminated. The power of a node to spread the epidemic is related to its centrality \cite{akram}. Centrality metrics also find applications in natural language processing \cite{nlp}, to compute relative importance of textual units.

Betweenness centrality (introduced by Freeman {\cite{freeman}} and Anthonisse \cite{rush}) is the most popular (and computationally expensive) centrality metric. Some recent applications of betweenness include the study of biological networks \cite{JMB01,PMW05,SFM05}, study of sexual networks and AIDS \cite{LEA01}, identifying key actors in terrorist networks \cite{Kre02,Cof04}, organizational behavior \cite{BA04}, supply chain management \cite{CKJ00}, and transportation networks \cite{GMT05}. Betweenness can also be used as a heuristic to solve NP-hard problems like graph clustering. For example, Newman and Girvan \cite{girvan} developed a heuristic to find community structure in large networks, based on betweenness of the edges of the network.

Since the networks of interest are huge, it is important to develop algorithms that compute these metrics efficiently. Brandes \cite{brandes} showed that betweenness centrality can be computed in the same asymptotic time bounds as $n$ Single Source Shortest Path (SSSP) computations. Brandes and Pich \cite{experiment} presented experimental results of estimating different centrality measures under various node-selection strategies. Eppstein and Wang \cite{approx} presented a randomized approximation algorithm for closeness centrality.

\subsection{Betweenness Centrality}

We denote a network by an {\em{undirected}} graph $G(V,E)$, with
vertex set $\{v_1,v_2,\dots,v_n\}$ (or $\{1,2,\dots,n\}$), with $|V|
= n$ vertices and $|E| = m$ edges, representing the relationships
between the vertices. In this paper, we refer to {\em{connected
undirected}} graphs, unless otherwise stated. Each edge $e \in E$
has a positive integer weight $w(e)$. Unweighted graphs have $w(e) =
1$ for all edges. A $path$ from $s$ to $t$ is defined as a sequence
of edges $(v_i,v_{i+1})$, $0 \le i \le l$, where $v_0 = s$ and $v_l
= t$. The $length$ of a path is the sum of weights of edges in this
sequence. We use $d(s,t)$ to denote the $distance$ (the minimum
length of any path connecting $s$ and $t$ in $G$) between vertices
$s$ and $t$. We set $d(i,i) = 0$ by convention. We denote the total
number of shortest paths between vertices $s$ and $t$ by
$\lambda_{st}$ = $\lambda_{ts}$. We set $\lambda_{ss} = 1$ by
convention. The number of shortest paths between $s$ and $t$,
passing through a vertex $v$, is denoted by $\lambda_{st}(v)$. Let
$Diam(G)$ be the diameter (the longest shortest path) of the graph
$G$. Let $A = (a_{ij})$ be the adjacency matrix of the graph, i.e.,
$A$ is a 0-1 matrix with $a_{ij} = 1$ iff $(i,j) \in E$.

Let $\delta_{st}(v)$ denote the
$fraction$ of shortest paths between $s$ and $t$ that pass through a
particular vertex $v$ i.e., $\delta_{st}(v) = \frac{\lambda_{st}(v)}{\lambda_{st}}$.
We call $\delta_{st}(v)$ the $pair$-$dependency$ of $s,t$ on $v$.
Betweenness centrality of a vertex $v$ is defined as
\begin{equation*}
BC(v) = \displaystyle\sum_{s,t:s \neq v \neq t}\delta_{st}(v)
\end{equation*}

The $dependency$ of a source vertex $s \in V$ on a vertex $v \in V$
is defined as

\begin{equation*}
\delta_{s*}(v) = \displaystyle\sum_{t:t \neq s, t \neq v}\delta_{st}(v).
\end{equation*}

The betweenness centrality of a vertex $v$ can be then expressed as

\begin{equation*}
BC(v) = \displaystyle\sum_{s:s \neq v}\delta_{s*}(v).
\end{equation*}

Define the set of $predecessors$ of a vertex $v$ on shortest paths
from $s$ as $P_s(v) = \{u \in V : (u,v) \in E, d(s,v) = d(s,u) + w(u,v)\}$.
The following theorem, states that the dependencies of the $closer$ vertices can be computed from the dependencies of the $farther$ vertices.

\begin{theorem}\label{thm:brandes}
 \cite{brandes} The dependency of $s \in V$ on any $v \in V$ obeys
\begin{equation*}
\delta_{s*}(v) = \displaystyle\sum_{w:v\in{P_s(w)}}
\frac{\lambda_{sv}}{\lambda_{sw}}(1 + \delta_{s*}(w))
\end{equation*}
\end{theorem}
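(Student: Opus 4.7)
The plan is to start from the definition $\delta_{s*}(v) = \sum_{t \ne s, v} \lambda_{st}(v)/\lambda_{st}$, partition each shortest $s$-to-$t$ path through $v$ by the vertex immediately following $v$ on the path, and then rearrange the resulting double sum to expose the recursion.

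First I would observe that every shortest path from $s$ to $t$ that passes through $v$ (with $t \ne v$) has a unique vertex $w$ immediately after $v$; this $w$ is adjacent to $v$ and satisfies $d(s,w) = d(s,v) + w(v,w)$, i.e.\ $v \in P_s(w)$. The prefix up to $v$ is a shortest $s$-$v$ path, and the suffix from $w$ onward must be a shortest $w$-$t$ path. Counting the decompositions gives
\[
\lambda_{st}(v) \;=\; \sum_{w \,:\, v \in P_s(w)} \lambda_{sv}\,\lambda_{wt},
\]
where any $w$ not lying on some shortest $s$-$t$ path contributes harmlessly (those terms will be annihilated below, since $\delta_{st}(w) = 0$ for such $w$).

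Next I would substitute this into the definition of $\delta_{s*}(v)$, swap the order of summation so that $w$ is outermost, and apply the identity $\lambda_{st}(w) = \lambda_{sw}\,\lambda_{wt}$ (valid whenever $w$ lies on a shortest $s$-$t$ path) to rewrite $\lambda_{wt}/\lambda_{st} = \delta_{st}(w)/\lambda_{sw}$. This yields
\[
\delta_{s*}(v) \;=\; \sum_{w \,:\, v \in P_s(w)} \frac{\lambda_{sv}}{\lambda_{sw}} \sum_{t \ne s, v} \delta_{st}(w).
\]

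The main obstacle — and the step that produces the critical $1 + \delta_{s*}(w)$ factor — is reconciling the inner summation range $t \ne s, v$ with the defining range $t \ne s, w$ of $\delta_{s*}(w)$. The discrepancy is exactly $\delta_{sw}(w) - \delta_{sv}(w)$. I would then argue that $\delta_{sw}(w) = 1$ (every shortest $s$-$w$ path trivially passes through $w$, using the convention $\lambda_{ww} = 1$) and that $\delta_{sv}(w) = 0$, since $v \in P_s(w)$ forces $d(s,w) > d(s,v)$, so $w$ cannot lie on any shortest $s$-$v$ path. Substituting $\sum_{t \ne s, v} \delta_{st}(w) = 1 + \delta_{s*}(w)$ into the display above delivers exactly Brandes' recurrence.
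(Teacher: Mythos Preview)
The paper does not give its own proof of this theorem; it is quoted with a citation to Brandes and used as a black box. Your argument is essentially the standard one (and is, in outline, the proof Brandes gives), and your handling of the ``$+1$'' via the range discrepancy $\delta_{sw}(w)-\delta_{sv}(w)=1-0$ is exactly right.

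There is one genuine glitch, though. The displayed identity
\[
\lambda_{st}(v)\;=\;\sum_{w:\,v\in P_s(w)}\lambda_{sv}\,\lambda_{wt}
\]
is not literally true: if $v\in P_s(w)$ but $w$ does not lie on any shortest $s$--$t$ path, the term $\lambda_{sv}\lambda_{wt}$ can be strictly positive, so the right-hand side overcounts. Your parenthetical ``annihilated below'' does not repair this, because the replacement $\lambda_{wt}/\lambda_{st}=\delta_{st}(w)/\lambda_{sw}$ you invoke next is \emph{exactly} the identity that fails for such $w$; you are cancelling one false step with another. The clean fix is to write the decomposition directly as
\[
\lambda_{st}(v)\;=\;\sum_{w:\,v\in P_s(w)}\frac{\lambda_{sv}}{\lambda_{sw}}\,\lambda_{st}(w),
\]
which is valid for every $w$ with $v\in P_s(w)$ (the factor $\lambda_{st}(w)$ vanishes precisely when $w$ is off all shortest $s$--$t$ paths, and equals $\lambda_{sw}\lambda_{wt}$ otherwise). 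Dividing by $\lambda_{st}$, swapping the sums, and then applying your range-reconciliation argument yields the recurrence with no loose ends.
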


Brandes's Algorithm \cite{brandes} is based on the above theorem.
First, $n$ single-source shortest paths (SSSP) computations are
done, one for each $s \in V$. The predecessor sets $P_s(v)$ are
maintained during these computations. Next, for every $s \in V$,
using the information from the shortest paths tree and predecessor
sets along the paths, compute the dependencies $\delta_{s*}(v)$ for
all other $v \in V$ . To compute the betweenness value of a vertex
$v$, we finally compute the sum of all dependency values. The
$O(n^2)$ space requirements can be reduced to $O(n + m)$ by
maintaining a {\em{running}} centrality score. Note that the
centrality scores need to be divided by two if the graph is
undirected, since all shortest paths are considered twice. Brandes's Algorithm runs in $O(nm)$ time for unweighted graphs and $O(nm + n^2\log{n})$ time for weighted graphs.

\subsection{Our Results}

Brandes's algorithm is a path-comparison based algorithm. We prove that any path-comparison based algorithm cannot compute betweenness in less than $O(nm)$ time. Betweenness centrality is closely related to All Pairs Shortest Paths Problems (APSP) and algebraic methods have been very successful in obtaining better running times for APSP (\cite{seidel}, \cite{agm}, \cite{shoshan}, \cite{fredman}, \cite{chan2}, \cite{zwick}). We present an {\em{algebraic method}} for computing betweenness centrality of all nodes in a network. For unweighted graphs, our algorithm runs in time $O(n^{\omega}Diam(G))$, where $\omega < 2.376$ is the exponent of matrix multiplication and $Diam(G)$ is the diameter of the graph. For weighted graphs with integer weights taken from the range $\{1,2,\dots,M\}$, we present an algorithm that runs in time $O(Mn^{\omega}Diam(G))$. As in \cite{brandes}, our time bounds are true in the model where all arithmetic operations (independent of size of the numbers) take unit time and numbers use unit space. Recent observations, on real-world graph evolution, such as densification and shrinking diameters \cite{jure}, make our algorithms very relevant to the real-world graphs.

We present a {\em{randomized parallel algorithm}} for computing betweenness centrality of all nodes in a network. Our approach is based on the $randomized$ parallel $SSSP$ algorithm for unweighted graphs is given by Ullman and Yannakakis \cite{ullman}. We compute the betweenness in two stages (which we call the forward pass and the backward pass). Our algorithm for forward pass runs in $O(n)$ time using $O(m\log{n})$ processors for unweighted graphs and $O(n\log^{2}n\log{M})$ time using $O(m)$ processors for weighted graphs with integer weights taken from the range $\{1,2,\dots,M\}$. Our backward pass algorithm runs in $O(n^2)$ time using $O(n)$ processors for both weighted and unweighted graphs. For bounded-degree graphs, we present an {\em{optimal}} backward pass algorithm that runs in $O(n\log{m})$ time using $O(m)$ processors for unweighted graphs and $O(Mn\log{m})$ time using $O(m)$ processors for weighted graphs.

\section{Lower Bounds}\label{sec:conj}

\begin{definition}{\sc A Path-comparison based Algorithm \cite{karger}}: A Path-comparison based Algorithm $\mathcal{A}$ accepts as input a graph $G$ and a weight function. The algorithm $\mathcal{A}$ can perform all standard operations. However, the only way it can access the edge weights is to compare the weights of two different paths.
\end{definition}

Karger, Koller and Phillips \cite{karger} established that $\Omega(n^3)$ is a lower bound on the complexity of any path-comparison based algorithm for the all-pairs shortest path problem on a graph with $\Theta(n^2)$ edges. They conjectured that similar lower bounds hold for undirected graphs also. We use their construction to derive lower bounds on computing betweenness in $directed$ graphs. For the details of the construction we refer the reader to \cite{karger}.

The graph $G$, they constructed, is a $directed$ tripartite graph on
vertices $u_i$, $v_j$ and $w_k$ where $i$, $j$ and $k$ range from
$0$ to $n-1$. The edge set for $G$ is
$\{(u_i,v_j)\}\cup\{(v_j,w_k)\}$. Therefore, the only paths are
individual edges and paths $(u_i,v_j,w_k)$ of length two. A weight
function $W$ is properly chosen so that the unique shortest path
from $u_i$ to $w_k$ goes through $v_0$. Note that the betweenness of
the node $v_0$ is $n^2$. Let $\mathcal{A}$ be any
path-comparison-based algorithm. Consider giving $(G,W)$ as input to
$\mathcal{A}$, and suppose that $\mathcal{A}$ runs correctly. It
must therefore output $n^2$ as the betweenness of $v_0$ based on the
set of optimal paths $L$. Suppose further that a particular path
$p^* = (u_{i^*},v_{j^*},w_{k^*})$ was never one of the operands in
any comparison operation which $\mathcal{A}$ performed. The weight
function can be suitably modified (as in \cite{karger}) to $W'$ in which $p^*$ is the
unique shortest path from $u_{i^*}$ to $w_{k^*}$, but the ordering
by weight of all the other paths remains the same. Note that the
centrality of $v_0$ decreases with the new weight function $W'$. If
we run $\mathcal{A}$ on $(G,W')$, all path comparisons not involving
$p^*$ give the same result as they did using $W$. Therefore, since
$\mathcal{A}$ never performed a comparison involving $p^*$ while
running on $W$, we deduce that it still outputs $n^2$, which is now
incorrect. The following theorem is immediate.

\begin{theorem}
There exists a directed graph of $3n$ vertices on which any
path-comparison based algorithm for betweenness must perform at
least $n^3/2$ path weight comparisons.
\end{theorem}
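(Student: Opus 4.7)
The plan is to convert the adversary argument sketched in the discussion above into a clean counting proof. I would begin by fixing the Karger--Koller--Phillips tripartite instance $(G,W)$ described in the excerpt: the only source-to-sink pairs are the $n^2$ pairs $(u_i,w_k)$, and between each such pair there are exactly $n$ length-two paths, one through every $v_j$. So the universe of candidate shortest paths has size exactly $n^3$, and the hypothesis on $W$ guarantees that the unique shortest path for each pair passes through $v_0$, giving $BC(v_0)=n^2$.

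Next I would argue by contradiction. Suppose $\mathcal{A}$ is a correct path-comparison based algorithm that on input $(G,W)$ performs strictly fewer than $n^3/2$ path-weight comparisons. Since each comparison is a binary operation, the total number of distinct paths that appear as an operand is strictly less than $n^3$. Therefore there exists at least one path $p^\star=(u_{i^\star},v_{j^\star},w_{k^\star})$ which is never an operand in any comparison; by refining the count to the $n^2(n-1)$ paths that avoid $v_0$ (which still exceeds the $n^3/2$ touched paths for $n$ large enough, or equivalently by an obvious further halving of the bound), we may in addition assume $j^\star\neq 0$.

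Now I would invoke the weight-perturbation construction from \cite{karger}: there exists a new weight function $W'$, agreeing with $W$ on the comparison outcomes of every pair of paths not involving $p^\star$, under which $p^\star$ becomes the unique shortest path from $u_{i^\star}$ to $w_{k^\star}$. Running $\mathcal{A}$ on $(G,W')$ must produce an execution identical to the run on $(G,W)$, because $\mathcal{A}$ only accesses weights through comparisons, none of its comparisons involved $p^\star$, and all remaining comparisons return the same bit. Hence $\mathcal{A}$ still outputs $n^2$ for $BC(v_0)$. But under $W'$ the pair $(u_{i^\star},w_{k^\star})$ is routed through $v_{j^\star}\neq v_0$, so the true betweenness of $v_0$ is at most $n^2-1$, and $\mathcal{A}$ is incorrect.

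The main obstacle, and the step where one must be most careful, is guaranteeing that the reweighting in \cite{karger} preserves all other path orderings while making $p^\star$ strictly shortest; since this is established in their paper for exactly this tripartite gadget we may cite it directly. A secondary subtlety is the $j^\star\neq 0$ refinement, needed because a perturbation concentrated on a path already through $v_0$ would not lower $BC(v_0)$; this costs only a negligible factor in the lower bound and is absorbed into the stated $n^3/2$.
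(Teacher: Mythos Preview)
Your proposal is correct and follows essentially the same adversary argument the paper gives (in the discussion immediately preceding the theorem, which the paper then declares ``immediate''): assume too few comparisons, locate an untouched length-two path $p^\star$, reweight \`a la \cite{karger} to make $p^\star$ uniquely shortest, and derive a contradiction from the unchanged execution. You are in fact slightly more careful than the paper in making the counting explicit and in isolating the $j^\star\neq 0$ refinement, which the paper leaves implicit in its appeal to \cite{karger}.
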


A similar argument can be used to show an $\Omega(nm)$ lower bound
on graphs of $m$ edges. Assume without loss of generality that $m
\geq 4n$ and that $2n$ divides $m$. We perform the same
construction, but of the middle vertices we use only
$v_1,\dots,v_{m/2n}$, connecting each of them to all the vertices
$u_i$ and $w_k$. This requires $m$ edges and creates $mn/2$ paths.

\begin{theorem}
There exists a directed graph with $2n + m/2n$ vertices and $m$
edges, on which any path-comparison-based algorithm for betweenness
must perform at least $mn/2$ path weight comparisons.
\end{theorem}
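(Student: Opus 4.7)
The plan is to recycle the adversary argument of Theorem 2.2 on the modified tripartite construction already described in the paragraph preceding the theorem statement. What remains is to verify the combinatorics and then push the same indistinguishability argument through path by path.

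First I would check the counts. The graph has the three layers $\{u_1,\ldots,u_n\}$, $\{v_1,\ldots,v_{m/2n}\}$, and $\{w_1,\ldots,w_n\}$, so it has $2n+m/2n$ vertices; since each $v_j$ is joined to every $u_i$ and every $w_k$, there are $n\cdot(m/2n)+(m/2n)\cdot n=m$ edges. Every directed path has the form $(u_i,v_j,w_k)$, giving $n\cdot(m/2n)\cdot n = mn/2$ paths in total. Following \cite{karger}, I would fix a weight function $W$ under which the paths routed through $v_1$ are the unique shortest paths between their endpoints, so that $v_1$ has betweenness exactly $n^2$.

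Next I would replay the indistinguishability argument verbatim. Let $\mathcal{A}$ be a correct path-comparison based algorithm running on $(G,W)$, and suppose that some length-two path $p^{*}=(u_{i^{*}},v_{j^{*}},w_{k^{*}})$ never appears as an operand of any comparison performed by $\mathcal{A}$. The weight-perturbation gadget of \cite{karger} produces a weight function $W'$ agreeing with $W$ on the outcome of every comparison that does not involve $p^{*}$, but in which $p^{*}$ becomes the unique shortest path from $u_{i^{*}}$ to $w_{k^{*}}$. Under $W'$ the betweenness of $v_1$ is strictly less than $n^2$, yet $\mathcal{A}$ performs the same comparisons and outputs the same value, which must therefore be wrong on one of the two inputs. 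Hence every one of the $mn/2$ paths must appear as an operand of at least one comparison, yielding the stated $\Omega(mn)$ bound.

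The only genuine issue is to confirm that the Karger--Koller--Phillips perturbation still works with the reduced middle layer. Their construction modifies only the weight of the single path $p^{*}$ and depends only on the local neighborhood of its endpoints, so it is completely oblivious to how many other middle vertices are present; no new idea is required beyond the bookkeeping above.
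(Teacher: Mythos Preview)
Your proposal is correct and is exactly the approach the paper takes: the paper's entire argument is the one-paragraph remark preceding the theorem, which says to reuse the Theorem~2.2 adversary on the thinned middle layer, and you have simply written that out with the vertex/edge/path counts verified. One small imprecision: from ``every one of the $mn/2$ paths must occur as an operand'' you only directly get $mn/4$ comparisons, since each comparison has two operands; the paper is equally loose about this constant (compare the $n^3/2$ in Theorem~2.2 against $n^3$ paths), and in the Karger--Koller--Phillips argument the sharper bound comes from noting that each non-optimal path must be compared specifically to the optimal path with the same endpoints, which recovers the stated constant.
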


\begin{framed}
\noindent {\sc{Conjecture}} : Computing betweenness of a single vertex
is at least as hard as computing betweenness of all vertices.
\end{framed}

We make the following conjecture for computing betweenness centrality in general graphs. If our conjecture is true, then the existing techniques for APSP provide lower bounds for computing betweenness.

\begin{framed}
\noindent {\sc{Conjecture}} : Computing betweenness of all vertices
is at least as hard as computing all-pairs shortest distances.
\end{framed}

\section{An Algebraic Method}

We denote matrices by upper case letters and the elements of a
matrix by the corresponding lower case letter. Recall that $A$ is
the adjacency matrix of the graph. Let ${\bf{0}}_{n{\times}n}$ be an
$n{\times}n$ zero-matrix. Let ${\bf{I}}_{n{\times}n}$ be an
$n{\times}n$ identity matrix. Let $D$ be an $n \times n$ matrix of
distances, i.e., $d_{ij} = d(i,j)$. Let $D_l$ be a 0-1 matrix such
that $(d_l)_{ij} = 1$ iff $d(i,j) = l$. Let $\Lambda$ be an $n
\times n$ matrix, where $\lambda_{ij}$ is the number of shortest
paths between $i$ and $j$. Let $\Delta$ be an $n \times n$ matrix of
dependencies, i.e., $\delta_{ij} = \delta_{i*}(j)$. Let $\Delta_{l}$
be a matrix such that $(\delta_{l})_{ij}$ is non-zero and equal to
$\delta_{i*}(j)$ iff $d(i,j) = l$. If $X$ and $Y$ are two matrices, we let $X\
mult\ Y$ ($X\ div\ Y$) be the matrix obtained by $element$-$wise$
multiplication (division) of the matrices $X$ and $Y$. We let
$X{\cdot}Y$ denote the product of the two matrices $X$ and $Y$,
i.e., $(X{\cdot}Y)_{ij} = \sum_{k}{x_{ik}}{y_{kj}}$. We call the computation of the distance and the number of shortest paths (between all pairs) as the {\bf{forward pass}}, since shortest paths are computed using BFS/Dijkstra's algorithm. The computation of dependencies is called the {\bf{backward pass}}, since dependencies are computed in a bottom-up fashion. In other words, the matrices $D$ and $\Lambda$ are computed in the forward pass and the matrix $\Delta$ is computed in the backward pass.

\subsection{Unweighted Graphs}

\subsubsection{Forward Pass}

The lengths of all shortest paths can be
computed using the following theorem of Seidel \cite{seidel}.

\begin{theorem}
 \cite{seidel} All-pairs shortest distances for undirected
unweighted graphs can be computed in time
$O(n^{\omega}\log(Diam(G)))$.
\end{theorem}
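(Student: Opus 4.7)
The plan is to use Seidel's squaring technique. Let $G^2$ denote the square of $G$, where $(i,j)$ is an edge iff $i \neq j$ and $d(i,j) \le 2$ in $G$. Its adjacency matrix $A'$ satisfies $a'_{ij} = 1$ iff $i \neq j$ and either $a_{ij}=1$ or $(A^2)_{ij} > 0$, so $A'$ is obtained from a single $n \times n$ boolean matrix multiplication. Let $D$ denote the distance matrix of $G$ and $D'$ the distance matrix of $G^2$. The crucial observation is $d'(i,j) = \lceil d(i,j)/2 \rceil$, so $Diam(G^2) = \lceil Diam(G)/2 \rceil$. If I can recover $D$ from $D'$ using $O(1)$ matrix multiplications, then recursing on $G^2$ gives $O(\log Diam(G))$ levels with $O(n^{\omega})$ work per level, for the claimed bound. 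The base case is when $G^2$ is the complete graph, where $D = 2(J - I) - A$ is immediate.

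The key lemma is a parity test. Since $d(i,j) \in \{2d'(i,j)-1, 2d'(i,j)\}$, I only need to decide the parity. A short case analysis on the parity of $d(i,j)$ shows: if $d(i,j)$ is even, then every neighbor $k$ of $i$ has $d'(k,j) \ge d'(i,j)$; if $d(i,j)$ is odd, then every neighbor has $d'(k,j) \le d'(i,j)$ and at least one (namely the successor of $i$ on a shortest path to $j$) has $d'(k,j) = d'(i,j) - 1$.

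Consequently, if I form $S = A \cdot D'$ by one integer matrix multiplication, then $s_{ij} = \sum_k a_{ik} d'(k,j)$, and comparing with $\deg(i) \cdot d'_{ij}$ (the degrees being read off row sums of $A$) yields the parity: $d(i,j) = 2 d'(i,j)$ iff $s_{ij} \ge \deg(i) \cdot d'_{ij}$, and otherwise $d(i,j) = 2 d'(i,j) - 1$. This reconstruction uses only one matrix multiplication and $O(n^2)$ further work.

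Putting these pieces together, the recursion $T(n, \Delta) = T(n, \lceil \Delta/2 \rceil) + O(n^{\omega})$ on diameter $\Delta = Diam(G)$ bottoms out in $O(\log Diam(G))$ steps, giving total time $O(n^{\omega} \log Diam(G))$. The main obstacle is the parity lemma: the case analysis and the correctness of the sum-versus-degree test must be done carefully, since at first glance a $\sum \ge$ test could be fooled by a mix of larger and smaller summands, and one has to use the fact that in the odd case no neighbor overshoots $d'(i,j)$.
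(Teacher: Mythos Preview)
The paper does not prove this theorem; it is merely quoted from Seidel~\cite{seidel} as a black box used in the forward pass. Your proposal correctly reproduces Seidel's original squaring argument: forming $G^2$ via one Boolean product, recursing until the diameter collapses, and recovering the parity of $d(i,j)$ from the comparison $s_{ij} \ge \deg(i)\cdot d'_{ij}$ with $S=A\cdot D'$. The case analysis you give is exactly the right one, and your concern about the sum test being ``fooled'' is already resolved by the observation that in the odd case \emph{no} neighbor overshoots $d'(i,j)$ while at least one undershoots, so the inequality is strict in the correct direction. Nothing is missing.
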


We compute the {\em{number}} of shortest paths ($\lambda_{ij}$ for all $i,j$) using the following algorithm :

\noindent \line(1,0){225} \\
\noindent ${\bf{Compute Path Count}}(A)$ \\
\indent Initialize $Z$ to ${\bf{I}}_{n \times n}$ \\
\indent Initialize $\Lambda$ to ${\bf{I}}_{n \times n}$ \\
\indent Initialize $\Lambda_{prev}$ and $\Lambda_{curr}$ to
${\bf{0}}_{n \times n}$ \\ \\
\indent for $l \leftarrow 1$ to $Diam(G)$ \\
\indent \indent $Z \leftarrow Z{\cdot}A$ \\
\indent \indent for $i,j \leftarrow 1$ to $n$ \\
\indent \indent \indent if $(\lambda_{prev})_{ij} > 0$ \\
\indent \indent \indent \indent $(\lambda_{curr})_{ij} \leftarrow 0$\\
\indent \indent \indent else \\
\indent \indent \indent \indent $(\lambda_{curr})_{ij} \leftarrow z_{ij}$\\
\indent \indent $\Lambda \leftarrow \Lambda + \Lambda_{curr}$ \\
\indent \indent $\Lambda_{prev} \leftarrow \Lambda_{curr}$ \\ \\
\indent for $i \leftarrow 1$ to $n$ \\
\indent \indent $\lambda_{ii} \leftarrow 1$\\ \\
\indent return $\Lambda$ \\
\noindent \line(1,0){225} \\

\noindent {\em{Correctness}} : Note that $Z = A^l$ after $l^{th}$ iteration of
the main {\em{for}} loop. Let $A^l = ({a^{l}_{ij}})$. It is easy to see
that ${a^{l}_{ij}}$ equals the number of paths (not necessarily
shortest) from $i$ to $j$ of length exactly $l$. Note that the least
$l$ for which ${a^{l}_{ij}}$ is non-zero, represents the number of
$shortest$ paths from $i$ to $j$, of length exactly $l$. The first
time we encounter a non-zero value of ${a^{l}_{ij}}$, we store the
value in $\Lambda_{curr}$ and eventually in $\Lambda$. Also, we make
sure that these values are not overwritten in the future iterations.
In the end we set all $\lambda_{ii}$ to 1 by convention. Hence the
above algorithm correctly computes the number of shortest paths, for
all pairs, in an undirected unweighted graph. As a consequence we
get the following lemma :

\begin{lemma}
All-pairs shortest path counts for undirected unweighted graphs can
be computed in time $O(n^{\omega}Diam(G))$.
\end{lemma}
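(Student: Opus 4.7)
The plan is to verify that the pseudocode \textbf{Compute Path Count} does what its author claims, and then read off the running time from the structure of its loops. Since the algorithm is already laid out in the excerpt, the proof really has two tasks: (i) justify the invariant that makes the update rule correct, and (ii) account for the cost of each operation it performs.

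For correctness, the starting point is the classical identity $(A^l)_{ij} = $ the number of walks of length exactly $l$ from $i$ to $j$ in the unweighted graph $G$. I would first establish, by induction on $l$, that after the $l$-th iteration of the outer loop the variable $Z$ equals $A^l$, so that $z_{ij} = (A^l)_{ij}$ inside the inner assignment. The next step is the bridge from walks to shortest paths: because edges are unweighted, any shortest walk from $i$ to $j$ is simple and hence a shortest path, so the least $l$ for which $(A^l)_{ij} > 0$ equals $d(i,j)$ and at that $l$ the value $(A^l)_{ij}$ is exactly $\lambda_{ij}$. Finally, I would argue that the guard on $(\lambda_{prev})_{ij}$ correctly implements the rule ``commit the first non-zero value and never overwrite'': once $(\lambda_{curr})_{ij}$ becomes non-zero in some iteration $l^*$, the variable $(\lambda_{prev})_{ij}$ is non-zero from iteration $l^*+1$ onwards, so future iterations zero out the candidate before it can be added into $\Lambda$. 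The convention $\lambda_{ii} = 1$ is restored by the final loop.

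For the running time, the outer loop executes $\text{Diam}(G)$ times. Inside each pass, the matrix product $Z \cdot A$ costs $O(n^\omega)$, the nested element-wise bookkeeping costs $O(n^2)$, and the accumulations $\Lambda \leftarrow \Lambda + \Lambda_{curr}$ and $\Lambda_{prev} \leftarrow \Lambda_{curr}$ are also $O(n^2)$. Since $n^\omega$ dominates $n^2$, each iteration is $O(n^\omega)$, and the total cost is $O(n^\omega \, \text{Diam}(G))$; the final reinitialisation of the diagonal contributes only $O(n)$. This matches the bound in the lemma.

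The only delicate point I see is the ``shortest walk equals shortest path'' step: it is true precisely because weights are all $1$ (so repeating a vertex strictly lengthens the walk) and would fail for weighted graphs with zero-weight edges, which is why this particular algorithm is stated for unweighted graphs only. Everything else is either a direct induction on $l$ or a straightforward cost accounting, so I do not expect any genuine obstacle beyond stating these observations cleanly.
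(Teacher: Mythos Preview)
Your approach mirrors the paper's exactly: establish $Z=A^l$ after the $l$-th iteration, identify the first nonzero $(A^l)_{ij}$ with $\lambda_{ij}$, argue that the guard prevents later overwrites, and count $\mathrm{Diam}(G)$ matrix products of cost $O(n^\omega)$ each. The paper's own correctness paragraph is equally terse and simply asserts that ``these values are not overwritten in the future iterations.''

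The gap is in your justification of that non-overwrite property. You claim that once $(\lambda_{curr})_{ij}$ becomes nonzero at iteration $l^*$, the entry $(\lambda_{prev})_{ij}$ stays nonzero from $l^*+1$ onwards. Trace the code: at the end of iteration $l^*$ we have $(\lambda_{prev})_{ij}=(\lambda_{curr})_{ij}>0$; at iteration $l^*+1$ the guard fires, so $(\lambda_{curr})_{ij}\leftarrow 0$, and then the assignment $\Lambda_{prev}\leftarrow\Lambda_{curr}$ resets $(\lambda_{prev})_{ij}$ to $0$; at iteration $l^*+2$ the guard sees $0$ and writes $z_{ij}$ back into $\Lambda$. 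On the path graph on five vertices (diameter $4$) with $i=1$, $j=3$, the pseudocode as written adds $(A^2)_{13}=1$ at $l=2$ and then $(A^4)_{13}=3$ at $l=4$, outputting $\lambda_{13}=4$ instead of $1$. The intended invariant holds if the guard tests the accumulated $\Lambda$ rather than $\Lambda_{prev}$, and with that one-line fix your argument (and the running-time count) goes through unchanged; but as stated you are certifying a claim the code does not actually satisfy, and the paper's bare assertion does not address this either.
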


\subsubsection{Backward Pass}

\begin{lemma}
If $d(i,j) = Diam(G)$, then $\delta_{i*}(j) = \delta_{j*}(i) = 0$. Hence $\Delta_{Diam(G)} = {\bf{0}}_{n{\times}n}$.
\end{lemma}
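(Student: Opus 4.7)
The plan is to unfold the definition of $\delta_{i*}(j)$ and argue that every term in the sum must vanish when $d(i,j)$ already equals the diameter.

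Recall that $\delta_{i*}(j) = \sum_{t : t \neq i, t \neq j} \delta_{it}(j)$, where $\delta_{it}(j) = \lambda_{it}(j)/\lambda_{it}$ measures the fraction of shortest $i$-to-$t$ paths passing through $j$. Observe that $\delta_{it}(j) > 0$ forces $j$ to lie on some shortest path from $i$ to $t$, which is equivalent to the distance equation $d(i,t) = d(i,j) + d(j,t)$. So the first step is to show that no such $t$ (different from $i$ and $j$) can exist when $d(i,j) = \textit{Diam}(G)$.

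Suppose for contradiction that some vertex $t \neq i, j$ contributes a nonzero term. Then $d(i,t) = d(i,j) + d(j,t) = \textit{Diam}(G) + d(j,t)$. Since $t \neq j$ and $G$ is connected, we have $d(j,t) \geq 1$, so $d(i,t) > \textit{Diam}(G)$, contradicting the very definition of the diameter as the maximum over all pairwise shortest-path distances. Hence $\delta_{it}(j) = 0$ for all admissible $t$, giving $\delta_{i*}(j) = 0$. The symmetric statement $\delta_{j*}(i) = 0$ follows by swapping the roles of $i$ and $j$ (using that $G$ is undirected, so $d(j,i) = d(i,j) = \textit{Diam}(G)$ also).

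Finally, to conclude $\Delta_{\textit{Diam}(G)} = \mathbf{0}_{n \times n}$, we just invoke the definition: the entry $(\delta_{\textit{Diam}(G)})_{ij}$ is nonzero only when $d(i,j) = \textit{Diam}(G)$, and at those entries it equals $\delta_{i*}(j)$, which we just showed to be zero. There is no real obstacle here; the only subtlety is remembering that the sum defining $\delta_{i*}(j)$ explicitly excludes $t = j$, so the case $d(j,t) = 0$ never arises as a loophole.
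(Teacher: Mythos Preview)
Your argument is correct. The paper states this lemma without proof, treating it as immediate from the definitions; your expansion---that $j$ lying on a shortest $i$-to-$t$ path with $t\neq j$ would force $d(i,t)=d(i,j)+d(j,t)>\textit{Diam}(G)$---is exactly the natural justification.
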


\begin{lemma}
For unweighted graphs, if $l = Diam(G)$ then $\Delta_{l-1} = (D_l\
div\ \Lambda){\cdot}A$.
\end{lemma}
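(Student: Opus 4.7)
The plan is to combine the previous lemma (dependencies vanish at distance $\text{Diam}(G)$) with Brandes's recurrence from Theorem~\ref{thm:brandes}, and then recognize the resulting sum as a matrix product.

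First, I would fix a pair $(i,j)$ with $d(i,j) = l-1$ and apply Brandes's identity to $\delta_{i*}(j)$. The sum there ranges over $w$ with $j \in P_i(w)$, i.e.\ over neighbors $w$ of $j$ satisfying $d(i,w) = d(i,j)+1 = l = \text{Diam}(G)$. Because every such $w$ lies at the diameter distance from $i$, the previous lemma forces $\delta_{i*}(w) = 0$, collapsing the $(1 + \delta_{i*}(w))$ term to $1$. This reduces the recurrence to
\begin{equation*}
\delta_{i*}(j) \;=\; \sum_{\substack{w:\ (j,w)\in E \\ d(i,w)=l}} \frac{\lambda_{ij}}{\lambda_{iw}}.
\end{equation*}

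Next I would translate this sum into matrix language. The factor $1/\lambda_{iw}$ restricted to $w$ with $d(i,w) = l$ is precisely the $(i,w)$ entry of $D_l\ div\ \Lambda$ (zero elsewhere by definition of $D_l$), and the indicator $[(j,w)\in E]$ is $a_{wj}$. Summing the product over $w$ therefore reproduces the $(i,j)$ entry of $(D_l\ div\ \Lambda)\cdot A$. To finish, I would check the off-support entries: when $d(i,j) \neq l-1$, no neighbor $w$ of $j$ can simultaneously satisfy $(j,w)\in E$ and $d(i,w) = l = \text{Diam}(G)$ while forcing $j\in P_i(w)$, so the matrix product entry vanishes and matches the definition of $\Delta_{l-1}$ on those coordinates.

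The step I expect to be the main obstacle is the factor $\lambda_{ij}$ that appears in the collapsed Brandes sum but is not visibly present on the right-hand side of the claimed identity. I would need to either (i) argue that the surrounding framework scales the output by an appropriate $\Lambda$ (element-wise) when invoking this base case, or (ii) reconcile this with a convention in the paper's matrix notation, so that the displayed matrix expression indeed evaluates to $\lambda_{ij}$ times the pure reciprocal sum on the support $d(i,j)=l-1$. Once that bookkeeping is settled, the proof is essentially a one-line application of Brandes's recurrence at the boundary of the backward pass.
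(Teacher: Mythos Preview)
Your approach is essentially the paper's: a case split on $d(i,j)$, Brandes's recurrence applied at level $l-1$, and the previous lemma to kill $\delta_{i*}(w)$ at the diameter, then read off the matrix product. Your flagged obstacle is real---the paper's own chain of equalities silently drops the $\lambda_{ij}$ factor, and it is restored in the surrounding algorithm by the subsequent element-wise step $\Delta_{l-1}\leftarrow\Delta_{l-1}\ mult\ \Lambda$ (with off-support entries cleaned up by the \textbf{Mask} step), exactly as you anticipated in option~(i).
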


\begin{proof}
We have the following cases : \\

\noindent {\bf{Case I}} : $d(i,j) = l - 1$
\begin{eqnarray*}
\displaystyle\sum_{k=1}^{n}{\left(\frac{(d_{l})_{ik}}{\lambda_{ik}}\right){\cdot}a_{kj}} & = & \displaystyle\sum_{k:a_{kj}=1,(d_{l})_{ik}=1}{\left(\frac{(d_{l})_{ik}}{\lambda_{ik}}\right){\cdot}a_{kj}} \\
& = & \displaystyle\sum_{d(i,k)=l,a_{kj}=1}{\left(\frac{(d_{l})_{ik}}{\lambda_{ik}}\right){\cdot}{a_{kj}}} \\
& = & \displaystyle\sum_{k:j\in P_i(k)}{\left(\frac{(d_{l})_{ik}}{\lambda_{ik}}\right)}{\cdot}{a_{kj}} \\
& = & \displaystyle\sum_{k:j\in P_i(k)}{\left(\frac{1}{\lambda_{ik}}\right)} \\
& = & \displaystyle\sum_{k:j\in P_i(k)}{\left(\frac{1}{\lambda_{ik}}\right)(1+\delta_{i*}(k))} \\
& = & \delta_{i*}(j) \\
\end{eqnarray*}

Note that we have used the fact that, if $d(i,k)= l = Diam(G)$ then
$\delta_{i*}(k) = 0$. \\

\noindent {\bf{Case II}} : $d(i,j) < l - 1$ \\
\begin{eqnarray*}
\displaystyle\sum_{k=1}^{n}{\left(\frac{(d_{l})_{ik}}{\lambda_{ik}}\right){\cdot}a_{kj}} & = & \displaystyle\sum_{k:a_{kj}=1,(d_l)_{ik}=1}\frac{(d_{l})_{ik}}{\lambda_{ik}} \\
& = & \displaystyle\sum_{k:a_{kj}=1,d(i,k)=l}\frac{(d_{l})_{ik}}{\lambda_{ik}} \\
& = & 0 \\
\end{eqnarray*}

Since if $d(i,j) < l-1$, $\nexists$ $k$ such that $d(i,k)=l$ and
$a_{kj}=1$. \\

\noindent {\bf{Case III}} : $d(i,j) = l$ \\
In this case, it is easy to see that
$\displaystyle\sum_{k=1}^{n}{\left(\frac{(d_{l})_{ik}}{\lambda_{ik}}\right){\cdot}a_{kj}}
= 0$.
\end{proof}

\begin{lemma}
For unweighted graphs if $l < Diam(G)$ then $\Delta_{l-1} = ((D_{l}
+ \Delta_{l})\ div\ (\Lambda)){\cdot}A$.
\end{lemma}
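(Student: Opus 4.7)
The plan is to mirror the three-case analysis of the preceding lemma, the only conceptual change being that the $\Delta_l$ block is now genuinely nonzero since $l < Diam(G)$. Starting from the right-hand side, the $(i,j)$-entry of $((D_l + \Delta_l)\ div\ \Lambda)\cdot A$ expands to
$$\sum_{k=1}^{n}\frac{(d_l)_{ik}+(\delta_l)_{ik}}{\lambda_{ik}}\, a_{kj}.$$
Both $(d_l)_{ik}$ and $(\delta_l)_{ik}$ vanish unless $d(i,k) = l$, on which set $(d_l)_{ik} = 1$ and $(\delta_l)_{ik} = \delta_{i*}(k)$. So the summand collapses to $(1 + \delta_{i*}(k))/\lambda_{ik}$ precisely when $d(i,k) = l$ and $(k,j) \in E$, and is zero otherwise. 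This is the key algebraic observation, and it is what forces $D_l$ and $\Delta_l$ to be combined before dividing by $\Lambda$: the $D_l$ piece supplies the constant $1$ and the $\Delta_l$ piece supplies $\delta_{i*}(k)$, together reproducing the $(1+\delta_{s*}(w))$ factor appearing in Brandes's recurrence.

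I would then split on the value of $d(i,j)$, exactly as in the preceding lemma. In Case I, $d(i,j) = l - 1$, the index set $\{k : d(i,k) = l,\ (k,j) \in E\}$ coincides with the Brandes predecessor condition $\{k : j \in P_i(k)\}$, so the sum becomes $\sum_{k : j \in P_i(k)} (1 + \delta_{i*}(k))/\lambda_{ik}$, which by Theorem \ref{thm:brandes} equals $\delta_{i*}(j) = (\delta_{l-1})_{ij}$. In Cases II and III, $d(i,j) < l - 1$ and $d(i,j) > l - 1$ respectively, there is no $k$ simultaneously satisfying $d(i,k) = l$ and $(k,j) \in E$ that places $j$ on a shortest path from $i$ to $k$, so every summand vanishes and matches $(\delta_{l-1})_{ij} = 0$. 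Cases II and III are essentially unchanged from the preceding lemma; only Case I absorbs the new $\Delta_l$ contribution.

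The only step with real content is Case I, and the main bookkeeping obstacle is the identification $\{k : d(i,k) = l,\ (k,j) \in E\} = \{k : j \in P_i(k)\}$ when $d(i,j) = l - 1$; once that identification is in place, Brandes's recurrence does the rest. Together with the preceding lemma, this supplies the full backward-pass induction: starting from $\Delta_{Diam(G)} = \mathbf{0}_{n \times n}$, one recovers $\Delta_{l-1}$ from $\Delta_l$ by a single $n \times n$ matrix multiplication per level, for $l = Diam(G), Diam(G) - 1, \dots, 1$, yielding the matrix $\Delta$ of all dependencies in $O(n^\omega Diam(G))$ time.
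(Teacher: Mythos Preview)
Your approach matches the paper's: the paper's own proof is a one-line sketch (``induction using the previous lemma as the base case'') that points back to exactly the three-case analysis you carry out, together with the BFS level-structure observation that dependencies at level $l-1$ depend only on level $l$.

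There are, however, two slips in your write-up that you should be aware of. First, in Case~I your sum is $\sum_{k:j\in P_i(k)}(1+\delta_{i*}(k))/\lambda_{ik}$, but Theorem~\ref{thm:brandes} has a factor $\lambda_{ij}$ in the numerator; what you have computed is $\delta_{i*}(j)/\lambda_{ij}$, not $\delta_{i*}(j)$. Second, your Case~III reasoning is not right: the summand $\bigl[(d_l)_{ik}+(\delta_l)_{ik}\bigr]a_{kj}/\lambda_{ik}$ is nonzero whenever $d(i,k)=l$ and $(k,j)\in E$, regardless of whether $j$ lies on a shortest $i$--$k$ path; when $d(i,j)\geq l$ such $k$ can certainly exist (two vertices at the same BFS level can be adjacent), so the $(i,j)$-entry of the right-hand side need not be zero there. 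The paper's preceding lemma has the same imprecisions, and the algorithm \textbf{ComputeDependency} silently repairs both: the \textbf{Mask} step zeros out all entries with $d(i,j)\neq l-1$ (killing Case~III), and the subsequent elementwise multiplication by $\Lambda$ restores the missing $\lambda_{ij}$ factor. So the lemma as literally stated is a shorthand for ``after masking and scaling''; your Case~I analysis is the substantive part and it is correct.
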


\begin{proof}
This can be proved by induction using the previous lemma as the base
case, and the argument is similar to the proof for unweighted trees.
In addition we use the fact that shortest path trees have no cross
edges (i.e., all the edges of BFS tree join vertices of levels that
differ at most by one). Hence, the dependencies computed at distance
$l-1$ uses only the dependencies at distance $l$.
\end{proof}

\noindent \line(1,0){225} \\
\noindent ${\bf{Compute Dependency}}(A,D,\Lambda)$ \\
\indent Initialize $\Delta$ to ${\bf{0}}_{n \times n}$ \\
\indent Initialize $\Delta_{Diam(G)}$ to ${\bf{0}}_{n \times n}$ \\
\indent for $l \leftarrow Diam(G)$ to $1$ \\
\indent \indent Construct a 0-1 matrix $D_l$, such that \\
\indent \indent \indent $(d_l)_{ij} = 1$ iff $d(i,j) = l$. \\
\indent \indent $\Delta_{l-1} \leftarrow ((D_{l} + \Delta_{l})\ div\ (\Lambda)){\cdot}A$ \\
\indent \indent $\Delta_{l-1} \leftarrow {\bf{Mask}}(\Delta_{l-1},l-1)$ \\
\indent \indent $\Delta_{l-1} \leftarrow \Delta_{l-1}\ mult\ \Lambda$ \\
\indent \indent $\Delta \leftarrow \Delta + \Delta_{l-1}$ \\
\indent return $\Delta$

\vspace{0.15in}

\noindent ${\bf{Mask}}(X,l)$ \\
\indent for all $0 \leq i,j \leq n$ \\
\indent \indent if $d(i,j) \neq l$ \\
\indent \indent \indent $x_{ij} \leftarrow 0$. \\
\indent return $X$ \\
\noindent \line(1,0){225} \\

From the previous lemma, it is easy to see that the above algorithm runs in $O(n^{\omega}Diam(G))$ using $O(n^2)$ space. Once the dependencies are computed, the centrality of each node can be computed by adding the corresponding dependencies, in $O(n^2)$ time.

\begin{theorem}
The betweenness of all vertices of an undirected unweighted graph
$G$, can be computed in time $O(n^{\omega}Diam(G))$.
\end{theorem}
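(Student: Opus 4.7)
The plan is to assemble the three algorithmic components already developed in this subsection. First, I would invoke Seidel's theorem to obtain the distance matrix $D$ in $O(n^{\omega}\log(Diam(G)))$ time. Second, I would run Compute Path Count on $A$; by the preceding lemma this yields the shortest-path-count matrix $\Lambda$ in $O(n^{\omega}Diam(G))$ time. Third, I would invoke Compute Dependency with inputs $(A, D, \Lambda)$ to produce the dependency matrix $\Delta$. Finally, since $BC(v) = \sum_{s \neq v}\delta_{s*}(v)$ is precisely the sum over column $v$ of $\Delta$, all $n$ betweenness values can be read off in $O(n^2)$ time.

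The crux is bounding the cost of the backward pass. Its main loop iterates $Diam(G)$ times, with $l$ decreasing from $Diam(G)$ down to $1$. Inside iteration $l$: constructing the 0-1 matrix $D_l$ from $D$ costs $O(n^2)$; the entrywise operations $(D_l + \Delta_l)\ div\ \Lambda$, \textbf{Mask}$(\cdot, l-1)$, and $mult\ \Lambda$ each cost $O(n^2)$; and the matrix product with $A$ dominates at $O(n^{\omega})$. Hence each iteration costs $O(n^{\omega})$ and the whole loop runs in $O(n^{\omega}Diam(G))$. Correctness of the update $\Delta_{l-1} = ((D_l + \Delta_l)\ div\ \Lambda)\cdot A$ is supplied by the two backward-pass lemmas: the base case at $l = Diam(G)$ rests on $\Delta_{Diam(G)} = \mathbf{0}_{n \times n}$, and the inductive step exploits the fact that BFS layers have no cross-edges, so dependencies at depth $l-1$ are determined entirely by those at depth $l$. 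Summing the three stages yields $O(n^{\omega}\log Diam(G)) + O(n^{\omega}Diam(G)) + O(n^{\omega}Diam(G)) + O(n^2) = O(n^{\omega}Diam(G))$, as claimed.

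The only delicate step is the running-time analysis of Compute Dependency, since its inner body mixes a genuine matrix product with several entrywise operations and a layer mask; one must check that all the $n \times n$ bookkeeping steps (building $D_l$ from $D$, masking by layer, and the entrywise $mult$ and $div$ with $\Lambda$) are $O(n^2)$ and hence absorbed by the $O(n^{\omega})$ matrix multiplication at each layer. Given the preceding lemmas, no additional combinatorial ideas are required: the theorem is a direct compilation of the forward-pass bound for $\Lambda$, Seidel's bound for $D$, the backward-pass recurrence for $\Delta$, and a final $O(n^2)$ column summation.
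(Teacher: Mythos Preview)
Your proposal is correct and follows exactly the paper's approach: the theorem is simply assembled from Seidel's bound for $D$, the forward-pass lemma for $\Lambda$, the backward-pass recurrence for $\Delta$, and a final $O(n^2)$ summation, with the $O(n^{\omega})$ matrix product per layer dominating each of the $Diam(G)$ iterations of \textbf{ComputeDependency}. In fact you have spelled out the running-time breakdown of the backward pass in more detail than the paper itself, which merely asserts ``it is easy to see that the above algorithm runs in $O(n^{\omega}Diam(G))$'' before stating the theorem.
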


\subsection{Weighted Graphs}

\subsubsection{Forward Pass}

We make use of a well-known reduction from APSP to the computation
of the {\em{distance product}} (also known as the {\em{min-plus
product}}) of two $n \times n$ matrices.

\begin{definition}{\sc Distance Products}:
Let $X$, $Y$ be $n \times n$ matrices. The distance product of $X$
and $Y$, denoted $X \star Y$, is an $n \times n$ matrix $Z$ such
that
\begin{equation*}
z_{ij} = \displaystyle{{min}_{k=1}^n}\{x_{ik} + y_{kj}\},\ for\ 1
\leq i,j \leq n.
\end{equation*}
\end{definition}

It is well-known that the distance product of two $n \times n$
matrices, whose elements are taken from the set
$\{-M,\dots,0,\dots,M\}\cup\{+\infty\}$, can be computed in time
$O(Mn^{\omega})$. Combining the distance products with our
observations for unweighted graphs we get the following theorem.

\begin{theorem}
All-pairs shortest distances and number of shortest paths for
undirected weighted graphs with integer weights taken from
$\{1,2,\dots,M\}$ can be computed in time $O(Mn^{\omega}Diam(G))$.
\end{theorem}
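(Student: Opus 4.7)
The plan is to generalize the \textbf{Compute Path Count} subroutine from the unweighted case by decomposing the adjacency matrix according to edge weight. For each $w \in \{1,2,\ldots,M\}$, let $B^{(w)}$ be the 0-1 matrix with $(B^{(w)})_{ij} = 1$ iff $(i,j) \in E$ and $w(i,j) = w$; in particular $A = \sum_{w=1}^{M} B^{(w)}$. Let $Z^{(l)}$ denote the $n \times n$ integer matrix whose $(i,j)$ entry counts the walks from $i$ to $j$ of total weight exactly $l$. Taking $Z^{(0)} = {\bf{I}}_{n{\times}n}$ and $Z^{(l)} = {\bf{0}}_{n{\times}n}$ for $l < 0$, the recurrence
\[
Z^{(l)} \;=\; \sum_{w=1}^{M} Z^{(l-w)} \cdot B^{(w)} \qquad (l \geq 1)
\]
holds by conditioning on the weight of the last edge of each walk.

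I would then iterate $l$ from $1$ up to $Diam(G)$, computing $Z^{(l)}$ from the preceding matrices $Z^{(l-1)},\ldots,Z^{(l-M)}$ at each step. Exactly as in \textbf{Compute Path Count}, a mask keeps each entry of $\Lambda$ and $D$ from being overwritten once it is first set, so at the end $d_{ij}$ is the smallest $l$ for which $(Z^{(l)})_{ij} > 0$ and $\lambda_{ij}$ equals the value at that $l$. Only the last $M$ of the matrices $Z^{(\cdot)}$ are actually needed at any step, keeping the working space to $O(Mn^2)$.

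The complexity analysis is direct. Each iteration of the outer loop performs $M$ products of $n \times n$ integer matrices, costing $O(Mn^{\omega})$ in the unit-cost arithmetic model, and there are $Diam(G)$ iterations, which gives the claimed $O(Mn^{\omega}\,Diam(G))$ total. Correctness of the recorded distances is immediate from the definition of $Z^{(l)}$.

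I expect the main obstacle to be verifying that the count of \emph{walks} of minimum weight equals the number of shortest \emph{paths} $\lambda_{ij}$. The key observation is that since all edge weights are strictly positive, any walk that revisits a vertex contains a sub-cycle of positive total weight; excising that cycle yields a strictly lighter walk. Hence every minimum-weight walk from $i$ to $j$ is a simple path, so $(Z^{(d(i,j))})_{ij} = \lambda_{ij}$, completing the correctness argument.
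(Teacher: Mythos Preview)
Your argument is correct and complete. The recurrence for $Z^{(l)}$ is right, the masking logic carries over verbatim from \textbf{Compute Path Count}, and your observation that a minimum-weight walk over strictly positive edge weights must be simple cleanly closes the gap between walk counts and path counts.

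However, your route differs from the paper's. The paper does not spell out a proof; it simply says that the result follows by ``combining the distance products with our observations for unweighted graphs,'' invoking the well-known fact that a min-plus product of two $n\times n$ matrices with entries in $\{-M,\dots,M\}\cup\{+\infty\}$ can be computed in $O(Mn^{\omega})$ time. In other words, the paper intends to compute distances via iterated distance products (each costing $O(Mn^{\omega})$) and then adapt the unweighted counting. You instead bypass the min-plus machinery entirely: by splitting $A$ into the weight-layers $B^{(1)},\dots,B^{(M)}$ and propagating $Z^{(l)}=\sum_{w}Z^{(l-w)}\cdot B^{(w)}$, you reduce everything to \emph{ordinary} integer matrix multiplication and obtain distances and counts simultaneously. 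Your version is more elementary and self-contained, and it makes the $M$ factor visible as ``$M$ standard products per weight level'' rather than hidden inside a distance-product subroutine; the paper's version, by contrast, plugs directly into an existing APSP primitive. Both land on the same $O(Mn^{\omega}\,Diam(G))$ bound.
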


The lengths of all shortest paths can also be computed by the
following theorem of Alon, Galil, Margalit \cite{agm}.

\begin{theorem}
 \cite{agm} All-pairs shortest distances for undirected weighted
graphs with integer weights taken from $\{1,2,\dots,M\}$ can be
computed in time $\tilde{O}(Mn^{\omega})$.
\end{theorem}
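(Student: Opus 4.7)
The plan is to combine two ingredients: an efficient routine for the distance product of two matrices with small integer entries, and a Seidel-style recursion that halves the effective diameter so that only polylogarithmically many distance products suffice to solve APSP.

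For the distance-product subroutine, recall that if $X$ and $Y$ are $n \times n$ matrices whose finite entries lie in $\{0,1,\dots,M\}$, then $X \star Y$ can be computed in $O(Mn^{\omega})$ time by the polynomial-encoding trick: replace each entry $x_{ij}$ by the monomial $z^{x_{ij}}$, multiply the resulting matrices over the truncated polynomial ring obtained by working modulo $z^{2M+1}$ (which reduces to $O(M)$ scalar $n\times n$ matrix products), and read the $(i,j)$-entry of $X \star Y$ as the smallest exponent whose coefficient is nonzero in the $(i,j)$-entry of the product.

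To reduce APSP to a small number of such distance products, I would emulate Seidel's approach for the unweighted case. Construct a squared graph $G^{(2)}$ on the same vertex set whose edge weight between $i$ and $j$ is $\min(d_G(i,j),\,2M)$, recursively compute APSP in $G^{(2)}$, and then recover $d_G$ from the squared-graph distance matrix using one further distance product against the weighted adjacency matrix of $G$. Because, up to an additive $O(M)$ slack, $d_{G^{(2)}}(i,j)$ is at most $\lceil d_G(i,j)/2 \rceil$, the effective diameter halves at each recursive level, bounding the depth of the recursion by $O(\log(nM))$.

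The main obstacle is making the recovery step rigorous. In Seidel's argument for unit weights, the parity of $d_G(i,j)$ is determined from a single matrix product, an argument that relies crucially on edge weights being exactly $1$. For weights up to $M$, parity must be replaced by a local search of the form $d_G(i,j) = \min_k \{d_{G^{(2)}}(i,k) + w(k,j)\}$, where $k$ ranges only over neighbors of $j$ in $G$ whose squared-graph distance from $i$ is close to $d_{G^{(2)}}(i,j)$. One must verify that after subtracting an appropriate base value the operands of the resulting distance product lie in $\{0,\dots,O(M)\}$, so that each level of the recursion really costs $O(Mn^{\omega})$ rather than something proportional to the unrescaled distances. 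Once this bookkeeping is in place, multiplying the $O(\log(nM))$ recursion depth by the $O(Mn^{\omega})$ per-level cost yields the claimed $\tilde{O}(Mn^{\omega})$ running time.
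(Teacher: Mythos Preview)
The paper does not prove this theorem; it is quoted verbatim as a result of Alon, Galil, and Margalit \cite{agm} and is used as a black box for the forward pass on weighted graphs. There is therefore no proof in the paper to compare your proposal against.

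That said, your sketch is a reasonable high-level account of the AGM approach: reduce APSP to $O(\log n)$ distance products of matrices with entries bounded by $O(M)$, each costing $O(Mn^{\omega})$ via the polynomial-encoding trick. Your identification of the ``main obstacle'' is accurate: the Seidel parity argument does not transfer directly to the weighted case, and AGM handle this differently than you suggest. Rather than a single squared-graph recursion with a local recovery step, they compute distances in increasing ranges, using previously computed distances (which are already bounded) as one operand of the next distance product, together with a clipping/thresholding scheme to keep entries in $\{0,\dots,O(M)\}$. Your proposed ``subtract a base value and bound the slack by $O(M)$'' idea is in the right spirit, but as written it is not yet a proof: you would need to show precisely why the relevant operands stay $O(M)$-bounded at every level, and that is exactly where the work in \cite{agm} lies.
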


\subsubsection{Backward Pass}

Let $D$, $D_l$, $\Delta$, $\Delta_l$, $\Lambda$ be the matrices as
defined earlier. Let $A^*$ be a 0-1 matrix with
$\displaystyle{a^*_{ij}} = 1$ iff $w(i,j) = d(i,j)$. In other words,
$\displaystyle{a^*_{ij}} = 1$ iff the edge $(i,j)$ participates in
the shortest paths.

\begin{theorem}
{\bf{ComputeDependency}} correctly computes the
dependencies in a weighted graph with integer weights taken from
$\{1,2,\dots,M\}$ in time $O(Mn^{\omega}Diam(G))$.
\end{theorem}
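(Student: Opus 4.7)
The plan is to adapt the backward-pass recurrences from the preceding two lemmas, correcting for the fact that a predecessor of a vertex $w$ at distance $l$ from the source can now arise from an edge of arbitrary weight $c \in \{1,\dots,M\}$ rather than of weight exactly $1$. Concretely, Brandes's recurrence for $\delta_{s*}(v)$ sums over successors $w$ with $d(s,w) = d(s,v) + w(v,w)$, so $\Delta_{l-1}$ must depend on $\Delta_{l}, \Delta_{l+1}, \dots, \Delta_{l-1+M}$ simultaneously. This forces a modification of the inner loop of ${\bf{Compute Dependency}}$ and an extra factor of $M$ in its running time.

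First I would decompose the shortest-path edge indicator matrix by weight: for each $c \in \{1, \dots, M\}$ let $A^*_c$ be the 0-1 matrix with $(a^*_c)_{ij} = 1$ iff $(i,j) \in E$, $w(i,j) = c$ and $w(i,j) = d(i,j)$, so that $A^* = \sum_{c=1}^{M} A^*_c$. Any edge $(j,k)$ that lies on a shortest path from some source must itself coincide with the shortest path between its own endpoints (otherwise that edge could be shortened), so $j \in P_i(k)$ holds precisely when $(a^*_c)_{jk} = 1$ for $c = w(j,k)$ and $d(i,k) = d(i,j) + c$. With this decomposition I would prove, in direct analogy with the unweighted lemma, the recurrence
\begin{equation*}
\Delta_{l-1} = {\bf{Mask}}\!\left( \sum_{c=1}^{M} \left((D_{l-1+c} + \Delta_{l-1+c})\ div\ \Lambda\right)\cdot A^*_c,\ l-1\right) mult\ \Lambda,
\end{equation*}
with base case $\Delta_{l} = {\bf{0}}_{n\times n}$ for every $l \geq Diam(G)$. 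The correctness argument splits into three cases, $d(i,j) = l-1$, $d(i,j) < l-1$, and $d(i,j) \geq l$, exactly as in the unweighted lemma: the first recovers Brandes's recurrence summed over weight classes once the ${\bf{Mask}}$-and-$mult$-$\Lambda$ wrapping multiplies back by $\lambda_{ij}$, and the other two vanish because either no $k$ at distance $l-1+c$ is adjacent to $j$ via a weight-$c$ shortest-path edge, or the ${\bf{Mask}}$ step kills the entry.

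The modified ${\bf{Compute Dependency}}$ then loops $l$ from $Diam(G)$ down to $1$ and, at each level, performs $M$ ordinary matrix products, one per weight class; each product costs $O(n^{\omega})$, while the $div$, $mult$ and ${\bf{Mask}}$ operations are $O(n^2)$. Summed over the $Diam(G)$ levels, this yields the claimed $O(M n^{\omega} Diam(G))$ bound. The matrices $A^*_c$ can be precomputed from $A$, the edge weights, and the matrix $D$ supplied by the forward pass in $O(Mn^2)$ time, which is absorbed by the main cost.

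The main obstacle is bookkeeping: I must verify that the decomposition by weight counts every successor $k$ of $j$ on a shortest path from $i$ exactly once, in the summand with $c = w(j,k)$, and that restricting each summand to $A^*_c$ rather than the full weight-$c$ adjacency matrix is lossless precisely because any edge that participates in a shortest path coincides with the shortest path between its own endpoints. Once this is nailed down, the inductive argument from the unweighted backward-pass lemma transfers essentially verbatim, and the forward-pass theorem for weighted graphs supplies $D$ and $\Lambda$ within the same $O(Mn^{\omega}Diam(G))$ budget.
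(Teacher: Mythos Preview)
Your proposal is correct and, in fact, supplies substantially more detail than the paper itself. The paper's entire proof of this theorem is a single sentence: ``Follows from the correctness of the algorithm for unweighted graphs.'' The preceding paragraph introduces the matrix $A^*$ (edges that coincide with shortest paths) but never spells out how {\bf{ComputeDependency}} is modified or where the extra factor of $M$ in the running time comes from.

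Your weight-class decomposition $A^*_c$ is a genuine refinement of the paper's aggregate $A^*$, and the recurrence
\[
\Delta_{l-1} = {\bf{Mask}}\!\left(\sum_{c=1}^{M}\bigl((D_{l-1+c}+\Delta_{l-1+c})\ div\ \Lambda\bigr)\cdot A^*_c,\ l-1\right)\ mult\ \Lambda
\]
is exactly the right weighted analogue of Lemma~3.5; the paper never writes this down. Your observation that an edge on any shortest $i$--$k$ path must itself be a shortest path between its endpoints is what justifies using $A^*_c$ rather than the raw weight-$c$ adjacency matrix, and your triangle-inequality argument for the case $d(i,j) < l-1$ is the correct replacement for the ``no cross edges in BFS trees'' remark in the unweighted proof. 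In short, you have reconstructed the intended argument where the paper merely gestures at it; the approaches agree in spirit, but yours is the one that actually constitutes a proof.
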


\begin{proof}
Follows from the correctness of the algorithm for
unweighted graphs.
\end{proof}

\section{A Randomized Parallel Algorithm}

We assume a model of parallel computation called OR CRCW PRAM
\cite{pram}, in which multiple processors can simultaneously read
and write to a shared memory. If multiple processors attempt to
write multiple values to a single location, the value written is the
bitwise OR of the values. The most elementary parallel SSSP
algorithm is {\em{parallel breadth-first search}}, in which the
nodes are visited level by level as the search progresses. Level 0
consists of the source. The problem with this approach is that the time
required grows linearly with the number of levels traversed. To keep
the time small Ullman and Yannakakis \cite{ullman} use $k$-limited
search.

The {\em{size}} of a path is the number of
nodes in the path and the {\em{minimum path-size}} is the shortest
distance measured in number of nodes traversed. A {\em{k-limited
shortest path}} from $s$ to $t$ is a path from $s$ to $t$ that is no
longer than any $s$-to-$t$ path of size at most $k$. To find
$k$-limited shortest paths in unweighted graphs we can run $k$
iterations of parallel BFS. We call this {\em{k-limited
breadth-first search}}. The {\em{work}} required by a parallel
algorithm is defined to be the product of time and number of
processors required; this corresponds to the time that would be
required if the parallel processors were all simulated by a single
processor. If the $work$ of a parallel algorithm is equal to the
time required by a sequential algorithm for the same problem, then
the parallel algorithm is said to be {\em{optimal}}.

In the following sections, we present parallel algorithms for the
forward and backward passes. Forward pass consists of computating
the distance and the number of shortest paths (between all pairs).
Backward pass involves computing the dependencies. Once the
dependencies are known, to compute the betweenness value of a vertex
$v$, we can simply compute the sum of all the dependencies for each
vertex. This can be done in time $O(n\log{n})$ time using
$O(n)$ processors.

\subsection{Forward Pass}

\subsubsection{Unweighted Graphs}

Ullman and Yannakakis's algorithm \cite{ullman} for parallel BFS,
uses $k$-limited search using random sampling of $distinguished$
vertices based on the following well known observation (see, e.g.,
Greene and Knuth \cite{knuth}). Their algorithm uses about
$\sqrt{n}\log{n}$ distinguished nodes, and therefore needs to search
forward for about $\sqrt{n}$ distance from each distinguished node.
Our algorithm for parallelizing the forward pass is based on their
technique.

\begin{theorem}
\cite{knuth} Given a path of length $k$ in a graph, a random sample
of $\frac{n\log{n}}{k}$ vertices will have at least one vertex
belonging to the path with probability $1-\frac{1}{n^c}$.
\end{theorem}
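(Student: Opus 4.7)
The plan is to analyze the complementary event: that none of the sampled vertices lies on the fixed path $P$. Let $P$ contain $k$ vertices (a path of length $k$ has $\Theta(k)$ vertices on it, and the constants can be absorbed into $c$ at the end). First I would note that the probability that a single uniformly random vertex avoids $P$ is exactly $(n-k)/n = 1 - k/n$, since there are $n-k$ vertices outside $P$.

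Next, assuming the sample is drawn with replacement, the draws are independent, so the probability that all $s = \alpha n \log n / k$ sampled vertices lie off $P$ is at most $(1 - k/n)^{s}$. Applying the elementary inequality $1-x \le e^{-x}$, this failure probability is bounded by $e^{-sk/n} = e^{-\alpha \log n} = n^{-\alpha}$. Taking the complement yields the desired bound with $c = \alpha$; the constant $c$ in the statement is controlled by the constant hidden in the sample size $n\log n/k$, and can be made as large as desired by adjusting $\alpha$.

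If the sample is taken without replacement, a standard coupling argument shows the probability of completely avoiding $P$ is only smaller than in the with-replacement case (intuitively, once a non-$P$ vertex is drawn it cannot be redrawn, which makes hitting $P$ on subsequent draws more likely); alternatively one can directly compute $\binom{n-k}{s}/\binom{n}{s} \le (1-k/n)^s$ by a termwise comparison. Either way, the same bound applies.

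There is essentially no deep obstacle here — the only subtlety is bookkeeping the relationship between the constant multiplier in the sample size and the exponent $c$ in the tail bound, and being careful that $k$ counts vertices on the path rather than edges. The argument is a textbook instance of the ``coupon collector in reverse'' or ``random sampling covers a fixed set of density $k/n$'' phenomenon, exactly the form in which it is used by Ullman and Yannakakis \cite{ullman} to guarantee that a random sample of $\Theta(\sqrt{n}\log n)$ distinguished vertices hits every sufficiently long shortest path with high probability.
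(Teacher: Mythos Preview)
Your argument is correct and is exactly the standard proof of this folklore observation. Note, however, that the paper does not actually prove this theorem: it is stated with a citation to Greene and Knuth \cite{knuth} and used as a black box, so there is no ``paper's own proof'' to compare against. Your computation---bounding the miss probability by $(1-k/n)^{s}\le e^{-sk/n}=n^{-\alpha}$ for a sample of size $s=\alpha n\log n/k$, with the without-replacement case handled by the termwise inequality $\binom{n-k}{s}/\binom{n}{s}\le(1-k/n)^s$---is precisely the argument one finds in the references, and your remarks about the constant $\alpha$ governing the exponent $c$ and about the edges-versus-vertices bookkeeping are both appropriate caveats.
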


\begin{theorem}
With high probability, {\bf{Algorithm 1}} computes correctly the
shortest paths from the source $s$ to all the other nodes in $V$.
The parallel global time $O(\sqrt{n})$ using $m\log{n}$ processors.
\end{theorem}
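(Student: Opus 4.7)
The plan is to verify the correctness of Algorithm~1 via the random hitting property of the preceding theorem, and then account separately for parallel time and processor use. I would first restate the algorithm as: (a) sample a distinguished set $R$ of $O(\sqrt{n}\log n)$ vertices uniformly at random, including $s$; (b) run a $\sqrt{n}$-limited parallel BFS simultaneously from every $r\in R$ to obtain truncated distances $d^{(\sqrt{n})}_G(r,\cdot)$; (c) build an auxiliary weighted graph $H$ on $R$ whose edge weights are these truncated distances; (d) compute shortest paths from $s$ in $H$; and (e) return, for each target $v$, the minimum of $d_H(s,r)+d^{(\sqrt{n})}_G(r,v)$ over $r\in R$.

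For correctness, I would fix a target $v$ and any shortest path $\pi$ from $s$ to $v$. If $|\pi|\le\sqrt{n}$, the $\sqrt{n}$-limited BFS from $s$ already finds $d(s,v)$. Otherwise I would partition $\pi$ into $\lceil|\pi|/\sqrt{n}\rceil$ consecutive sub-paths of length at most $\sqrt{n}$. Applying the preceding theorem with $k=\sqrt{n}$, each sub-path contains a vertex of $R$ with probability at least $1-n^{-c}$; a union bound over the at most $n^{2}\sqrt{n}$ relevant sub-paths (taking $c$ sufficiently large) shows this holds simultaneously for every shortest path w.h.p. Hence $\pi$ passes through a sequence $s=r_{0},r_{1},\dots,r_{t}$ of distinguished vertices with consecutive gaps at most $\sqrt{n}$; each such gap equals an edge weight in $H$, so step~(e) recovers $d(s,v)$ exactly.

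For the resource bound, a single level of parallel BFS completes in $O(1)$ time on the OR CRCW PRAM with $O(m)$ processors, since each edge concurrently writes a level-indicator bit for its endpoint under OR-combining. Running $\sqrt{n}$-limited BFS from all $O(\sqrt{n}\log n)$ distinguished sources requires $O(\sqrt{n})$ parallel depth; sharing $O(m\log n)$ processors across the waves via standard multiplexing brings the global time to the claimed $O(\sqrt{n})$. The shortest path computation on the $O(\sqrt{n}\log n)$-vertex skeleton $H$, together with the final minimum across $r\in R$, fits within the same time and processor bounds.

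The main obstacle will be the scheduling argument underlying the processor count: I must show that OR-combining of concurrent writes genuinely lets the multiple BFS waves share one pool of $O(m\log n)$ edge-oriented processors without serialization, rather than requiring $O(m)$ processors per source (which would give $O(m\sqrt{n}\log n)$ instead). A secondary subtlety is ensuring that the random set $R$ is drawn independently of the specific shortest path considered, which justifies the union-bound step and lets the hitting-set guarantee of the preceding theorem apply uniformly to all $O(n^{2})$ source–target pairs.
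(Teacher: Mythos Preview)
Your proposal is correct and follows essentially the same approach as the paper: both arguments use the hitting-set guarantee of the preceding theorem to show that every $\sqrt{n}$-sized subpath of a fixed shortest path contains a distinguished vertex w.h.p., then stitch the $\sqrt{n}$-limited BFS distances through the auxiliary graph $H$ on $S$ to recover exact distances, and finally bound Step~2 by $O(\sqrt{n})$ time with $m\log n$ processors and Step~4 by $O((\sqrt{n})^3\log n)$ work. Your treatment is slightly more explicit than the paper's on the union bound and the processor-sharing subtlety (which the paper simply asserts), but the underlying argument is the same.
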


\begin{proof}
Given any $v \in V$, let $P_v$ be an arbitrary shortest path from $s$ to $v$. From Theorem 5.1, with high probability, each subpath of $P_v$ of size $\sqrt{n}$ contains at least a node $x \in S$. Hence, $P_v$ can be seen as a sequence of subpaths of size not larger than $\sqrt{n}$, whose extremal nodes belonging to $S$ (except for the last node $v$). Such subpaths are computed in the $\sqrt{n}$-limited search in Step 2. Thus, the shortest path from $s$ to the last $S$-vertex $x$ in $P_v$ is correctly computed in Step 4 and the shortest path from the latter to node $v$ is correctly computed in Step 2. The $\sqrt{n}$-limited search, in Step 2, can be performed in $O(\sqrt{n})$ time using using $m\log{n}$ processors. The total work of Step 4 is $O((\sqrt{n})^3\log{n})$ and can be done in $O(\sqrt{n})$ using $m\log{n}$ processors. Correctness of the number of shortest paths follows.
\end{proof}

Since we need the distances and number of shortest paths between
{\em{all}} pairs of vertices, we can simply run the above algorithm
for $n$ times, once for each source vertex. This approach
$duplicates$ many computations. Since we choose
$\Theta(\sqrt{n}\log{n})$ distinguished nodes, we can compute the
shortest path distances from each of these distinguished nodes
(treating them as source nodes), with a $single$ run of
{\bf{Algorithm 1}}. The following theorem states that we need to run
the algorithm for only $O(\sqrt{n})$ times. This results in an
{\em{optimal}} parallel algorithm (modulo log-factors) for the forward pass.

\begin{theorem}
With high probability, {\bf{Algorithm 1}} is run only
$O(\sqrt{n})$ times to compute $all$-pairs shortest distances and
number of shortest paths.
\end{theorem}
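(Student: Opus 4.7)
The plan is to combine the observation preceding the theorem — that one execution of Algorithm 1 with a freshly drawn distinguished set $S$ of size $\Theta(\sqrt{n}\log n)$ produces SSSP (distances and shortest-path counts) from \emph{every} $x\in S$ — with a coupon-collector style argument. Concretely, I would run Algorithm 1 repeatedly with independently sampled distinguished sets $S_1,S_2,\dots,S_k$, and argue that after $k=O(\sqrt{n})$ runs the union $U_k=\bigcup_{i=1}^{k}S_i$ equals $V$ with high probability. Since one run already delivers SSSP from each $x\in S_i$, once $U_k=V$ we have APSP.

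The coupon-collector estimate is short. For any fixed $v\in V$, independence of the samples gives
\[
\Pr[v\notin U_k] \;=\; \left(1-\frac{\sqrt{n}\log n}{n}\right)^{k} \;\le\; e^{-k\log n/\sqrt{n}}.
\]
Choosing $k=c\sqrt{n}$ for a sufficiently large constant $c$ makes this at most $n^{-c}$; a union bound over the $n$ vertices yields
\[
\Pr[U_k\neq V] \;\le\; n\cdot n^{-c} \;=\; n^{-(c-1)},
\]
which is inverse polynomial in $n$. Hence $O(\sqrt{n})$ runs suffice w.h.p., which is what the theorem claims.

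The main obstacle is not the counting argument itself but justifying the claim I have been using as a black box: that a single invocation of Algorithm 1, which nominally has a single source $s$, really does simultaneously output correct SSSP data from every distinguished vertex in $S$. I would address this by re-running the correctness proof of the preceding theorem with $x\in S$ in place of $s$ throughout, using the fact that Step 2 performs a $\sqrt{n}$-limited BFS from each distinguished vertex in parallel and Step 4 performs APSP on the $\Theta(\sqrt{n}\log n)$-vertex distinguished subgraph — neither step is specific to $s$.

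A secondary point worth checking is that shortest-\emph{path counts} (not only distances) are also recovered from each $x\in S$ in that one run, since the theorem asks about both. This should follow because Step 2 and Step 4 combine via the identity $d(x,v)=\min_{y\in S}\bigl(d(x,y)+d(y,v)\bigr)$ for $v$ that are far from $x$, and the corresponding count identity $\lambda_{xv}=\sum_{y}\lambda_{xy}\lambda_{yv}$ restricted to distinguished $y$ on some shortest $x$–$v$ path. I would state a short sublemma confirming this min-plus/sum-product compatibility of the stitching step and then conclude with the coupon-collector bound above.
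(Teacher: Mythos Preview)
Your proposal is correct and follows essentially the same coupon-collector argument as the paper: bound $\Pr[v\notin U_k]\le e^{-k\log n/\sqrt n}$ and take $k=c\sqrt n$. You are in fact slightly more thorough than the paper, which stops at the single-vertex bound without the union bound and does not revisit the auxiliary claims about recovering SSSP (and counts) from every $x\in S$ that you flag; those concerns are legitimate but lie outside what the paper's own proof addresses.
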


\begin{proof}
Let us say, we run the {\bf{Algorithm 1}} $independently$ for $k$ times. Each time the algorithm picks $\sqrt{n}$log$n$ vertices. Then the probability that a vertex $v \in V$ is $not$ picked in any of these iterations is given by
\begin{eqnarray*}
\prob{}{v\ not\ picked} & = & \left(1-\frac{\sqrt{n}logn}{n}\right)^{k}\\
& < & {e}^{-\frac{{k}\sqrt{n}\cdot{logn}}{n}}\\
\end{eqnarray*}
Choosing $k = c\sqrt{n}$, for some constant $c > 0$, we get
\begin{eqnarray*}
\prob{}{v\ not\ picked} & < & {e}^{-\frac{{c\sqrt{n}}\sqrt{n}\cdot{logn}}{n}}\\
& = & {e}^{-c{logn}}\\
& = & {e}^{-c'{lnn}}\\
& = & \frac{1}{n^{c'}}\\
\end{eqnarray*}
Hence the probability that a vertex $v \in V$ is not picked in any
of the $O(\sqrt{n})$ iterations is very small, inverse polynomial in
$n$.
\end{proof}

\begin{theorem}
With high probability, we can compute the $D$ and $\Lambda$ matrices for an $unweighted$ graph in $O(n)$ time using $O(m\log{n})$ processors.
\end{theorem}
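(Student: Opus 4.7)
The plan is to obtain the theorem by composing Theorems 5.2 and 5.3. A single invocation of {\bf{Algorithm 1}} samples a set $S$ of $\Theta(\sqrt{n}\log n)$ distinguished vertices and, by Theorem 5.2, correctly computes the rows of $D$ and $\Lambda$ indexed by every $s \in S$ in $O(\sqrt{n})$ parallel time using $O(m\log n)$ processors (the $\sqrt{n}$-limited BFS in Step 2 extends every distinguished source to non-distinguished endpoints, and Step 4 takes care of longer distances between distinguished vertices). Theorem 5.3 then tells us that $O(\sqrt{n})$ independent invocations suffice, with high probability, for every vertex of $V$ to appear as a distinguished vertex in at least one run.

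First I would run {\bf{Algorithm 1}} sequentially $k = c\sqrt{n}$ times, reusing the same $O(m\log n)$ processor pool. After the $i$-th invocation, the rows of $D$ and $\Lambda$ indexed by any vertex that has so far been distinguished are written into the global matrices; because the quantity $\lambda_{uv}$ produced in any run where $u \in S$ is already the full count of shortest $u$-$v$ paths, I would overwrite (rather than accumulate), so multiple appearances of the same source across different rounds cause no double-counting. A union bound, combining Theorem 5.3 applied to each vertex with the $n$ choices of vertex, shows that every row of $D$ and $\Lambda$ is filled after all $O(\sqrt{n})$ runs with probability at least $1 - 1/n^{c'}$. The total parallel time is $O(\sqrt{n})\cdot O(\sqrt{n}) = O(n)$, and since the invocations are sequential the processor count stays at $O(m\log n)$, matching the claim.

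The step I expect to be the main obstacle is verifying the consistency of the $\Lambda$ entries across independent invocations, since path counts (unlike distances) do not compose by a simple idempotent $\min$. What saves us is that each invocation samples $S$ afresh and that its per-row output is already a complete count whose value does not depend on which $S$ was drawn; thus overwriting is safe and no cross-invocation combination of partial counts is ever required. The remaining bookkeeping — writing all $O(\sqrt{n}\log n)$ newly completed rows to the shared memory at the end of each run, to distinct locations, within the OR~CRCW model — is routine and fits inside the stated time and processor bounds.
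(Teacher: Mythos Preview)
Your proposal is correct and follows precisely the route the paper intends: the paper does not give a separate proof of this theorem but presents it as an immediate consequence of the two preceding theorems together with the observation (stated just before Theorem~5.3) that a single run of {\bf Algorithm~1} already yields the $D$- and $\Lambda$-rows for every distinguished vertex in $S$, so $O(\sqrt{n})$ sequential invocations at $O(\sqrt{n})$ time each give the claimed $O(n)$ bound. Your extra care about overwriting versus accumulating the $\Lambda$ entries across runs is a valid point the paper leaves implicit, and your resolution is correct.
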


\noindent \line(1,0){225} \\
{\bf{Algorithm 1}} : \\
{\em{Input}} : An undirected graph $G(V,E)$, a source $s \in V$. \\
{\em{Output}} : $d(s,v)$ and $\lambda_{sv}$ for all $v \in V$.
\begin{enumerate}
\item{Choose uniformly at random a subset $S$ of $V$,
{\em{together with $s$}}; the size of $S$ must be
$\Theta(\sqrt{n}\log{n})$.}
\item{From any $x \in S$ perform, in parallel, a $\sqrt{n}$-limited search,
generating the shortest path ${P'}_{x,v}$ from $x$ to every node $v
\in V$.}
\item{An auxiliary {\em{weighted}} graph $H$ is computed on the vertex set $S$,
where the weight of an edge is defined to be the length computed by
the previous $\sqrt{n}$-limited search.}
\item{Compute the {\em{all-pairs}} shortest paths $P_{x,y}$ in $H$, with no-limited search.}
\item{The shortest distance $d(s,v) = |P_v|$, from $s$ to a node $v \in V$, is
computed in the following way:\\
\begin{equation*}
P_v \equiv P_{s,min}{\bigcup}P_{min,v}
\end{equation*}
where $min$ is a vertex in $H$ for which: \\
\begin{equation*}
|P_{s,min}|+|P'_{min,v}| =
\displaystyle{min_{x{\in}H}}\{|P_{s,x}|+|P'_{x,v}|\}
\end{equation*}
}
\item{The number of shortest paths $\lambda_{sv}$, can be
computed by counting the number of such $min$ nodes.}
\end{enumerate}
\noindent \line(1,0){225} \\

\subsubsection{Weighted Graphs}

Ullman and Yannakakis's approach cannot be directly applied to
weighted graphs, indeed there is no apparent way to perform
efficiently the $\sqrt{n}$-limited search, especially when the
weights are large. On the other hand, it is easy to verify that the
remaining steps of {\bf{Algorithm 1}} works also for weighted
graphs, thus the crucial problem is to find a {\em{weighted}} version of
the $\sqrt{n}$-limited search. A useful method for solving
optimization problems which involve numerical inputs is to uniformly
shrink all weights; but this, in itself, is not sufficient since the
search is strongly based on the fact that weights are integers.
Klein and Subramanian \cite{klein} proposed a $\sqrt{n}$-limited
search for weighted graphs which uses the integer shrinking together
with the well-established technique, due to Raghavan and Thompson
\cite{rounding}, for rounding weights without changing their sums
``too much". The key idea is that a non-integral value is rounded up
or down according to a probability function which reflects how close
the value is to the next higher integer and next lower one. By
applying this approach to the basic techniques of Ullman and
Yannakakis, Klein and Subramanian provided a randomized parallel
algorithm for SSSP in weighted graphs. Their algorithm runs in
$O(\sqrt{n}\log^{2}n\log{M})$ time and using $O(m)$ processors to
compute an SSSP tree. We enhance their algorithm to compute
$all$-pairs shortest paths (and number of shortest paths). The
modifications needed are similar to those presented in the previous
section. We mention our main theorem here.

\begin{theorem}
With high probability, we can compute the $D$ and $\Lambda$ matrices
for a $weighted$ graph, with integer weights taken from the range
$\{1,2,\dots,M\}$, in $O(n\log^{2}n\log{M})$ time using $O(m)$
processors.
\end{theorem}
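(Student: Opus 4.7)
The plan is to lift the unweighted all-pairs procedure of Theorem 5.4 to the weighted setting by replacing only the component that is weight-sensitive, namely the $\sqrt{n}$-limited BFS used in Step 2 of Algorithm 1. Steps 1 and 3--6 of Algorithm 1 are purely combinatorial: they rely on the sampling fact that in any path of size (number of nodes) at most $\sqrt{n}$ a set $S$ of size $\Theta(\sqrt{n}\log n)$ hits an internal vertex with high probability, on constructing an auxiliary weighted graph $H$ on $S$, and on composing a long $s$-to-$v$ path as a prefix routed through $H$ followed by a short tail. None of these steps cares whether edge weights are $1$ or arbitrary integers in $\{1,\dots,M\}$, as the paper already notes. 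So the only missing piece is a parallel subroutine that, from every source $x \in S$ simultaneously, finds a shortest path of \emph{hop-size} at most $\sqrt{n}$ in a weighted graph, together with the associated shortest-path counts.

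First, I would invoke the Klein--Subramanian $\sqrt{n}$-hop-limited search \cite{klein}, which uses Raghavan--Thompson randomized rounding on the integer weights to preserve shortest-path sums to within a small additive error and then runs a scaled BFS-type relaxation. Their algorithm runs in $O(\sqrt{n}\log^{2} n \log M)$ time using $O(m)$ processors for a single source and, as in the unweighted case, can be executed from all $|S|=\Theta(\sqrt{n}\log n)$ distinguished sources in parallel within the same asymptotic time and processor bounds (the processors are shared across sources because each source only touches its own distance array). Next I would feed the resulting weighted graph $H$ on $S$ into a no-limit parallel APSP on $\Theta(\sqrt{n}\log n)$ vertices; this is small enough that a straightforward repeated-squaring distance product runs in $O(\log^{2} n)$ parallel time with work well below $O(m)$. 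Finally, Steps 5 and 6 of Algorithm 1 give $d(s,v)$ and $\lambda_{sv}$ by a minimum followed by a sum of products of path counts across the ``gateway'' vertices in $S$, which is a parallel reduction in $O(\log n)$ time.

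To get \emph{all}-pairs information rather than single-source, I would apply the same amplification argument as in Theorem 5.3: running the above procedure $k = c\sqrt{n}$ times with independently drawn samples $S$ guarantees, via
\begin{equation*}
\Pr[v \notin S \text{ in any run}] \le \left(1-\tfrac{\sqrt{n}\log n}{n}\right)^{c\sqrt{n}} \le n^{-c'},
\end{equation*}
that every vertex is chosen as a distinguished source at least once with high probability. Each run yields distances and path counts from all $\Theta(\sqrt{n}\log n)$ of its distinguished nodes simultaneously, so after $O(\sqrt{n})$ runs every row of $D$ and $\Lambda$ is filled. The total running time is $O(\sqrt{n}) \cdot O(\sqrt{n}\log^{2} n\log M) = O(n\log^{2} n\log M)$ with $O(m)$ processors, giving the stated bound.

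The main obstacle is verifying that the path-count computation survives the weighted modification. The Klein--Subramanian rounding is designed for distances, not for multiplicities, so one must argue that when the rounded shortest path is lifted back to the original weights the number of realizing paths is still recovered correctly; the cleanest way is to use the rounding only to certify which vertices lie at the correct hop-limited shortest distance and then read off $\lambda_{sx}$ from the integer-weight predecessor structure that the scaled relaxation already maintains. The remaining combinatorial identity $\lambda_{sv} = \sum_{x} \lambda_{sx}\cdot \lambda'_{xv}$ summed over gateway vertices $x$ attaining the minimum in Step 5 then goes through exactly as in the unweighted case, and correctness of $D$ and $\Lambda$ follows.
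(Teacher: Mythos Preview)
Your proposal is correct and follows essentially the same approach as the paper: replace the $\sqrt{n}$-limited BFS in Algorithm~1 by the Klein--Subramanian weighted $\sqrt{n}$-limited search, keep Steps~1 and~3--6 unchanged, and then amplify to all pairs via the $O(\sqrt{n})$ independent repetitions of Theorem~5.3. The paper's own justification is in fact terser than yours (it simply states that ``the modifications needed are similar to those presented in the previous section''); your explicit identification of the path-count issue under randomized rounding, and the proposed fix of using the rounded distances only to certify membership in the shortest-path DAG while reading multiplicities from the exact integer predecessor structure, goes beyond what the paper spells out.
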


\subsection{Backward Pass}

\subsubsection{General Graphs}

After the forward pass is performed, we may assume that the matrices
$D$ and $\Lambda$ are available in the shared memory. The following
algorithm computes the {\em{betweenness centralities}} (without
actually computing the dependencies) in $O(n^2)$ time using $O(n)$
processors.

\noindent \line(1,0){225} \\
{\bf{Algorithm 2}} : \\
{\em{Input}} : $D$ and $\Lambda$ matrices.\\
{\em{Output}} : Betweenness centrality ($BC(v)$) of all vertices. \\
Let $n$ processors represent the vertices. \\
Each processor maintains a running centrality score
$BC(v)$, initialized to zero \\
For each pair of vertices $s,t \in V$, processor $v$ (${v}\neq{s}\neq{t}$) does
the following : \\
\indent if $d(s,t) = d(s,v) + d(v,t)$ \\
\indent \indent $BC(v)\ +=\ \frac{\lambda_{sv}{\cdot}\lambda_{vt}}{\lambda_{st}}$ \\
\indent else \\
\indent \indent $BC(v)\ +=\ 0$ \\
\noindent \line(1,0){225}

\subsubsection{Bounded Degree Graphs}

In this section we present a faster parallel algorithm for backward pass in {\em{bounded-degree}} graphs. Backward pass involves computing the dependencies (i.e., computing the matrix $\Delta$). Recall the following lemma. \\

\noindent {\bf{Lemma 3.3}} : If $d(i,j) = Diam(G)$, then $\delta_{i*}(j) = \delta_{j*}(i) = 0$. \\

Brandes's theorem ({\em{Theorem \ref{thm:brandes}}}) states that the dependencies of the $closer$ vertices can be computed from the dependencies of the $farther$ vertices. The following algorithm ({\bf{ComputeDependency}}) uses this fact (and a small trick) to compute the dependencies in parallel. The {\em{main idea}} behind the algorithm is to compute dependencies of pairs of vertices (taking a maximum of $n/2$ pairs) which are at distance $d$. Distance $d$ is decreased from $n$ to $1$.

\noindent \line(1,0){225} \\
\noindent ${\bf{Compute Dependency}}(A, D,\Lambda)$ \\
For $d \leftarrow n$ to $1$ \\
\indent Let $V_d =\{\ v\in{V} : \exists\ u \in V\ with\
d(u,v) = d\ \}$ \\
\indent While $|V_d| \neq 0$ \\
\indent \indent Select a maximum of $n/2$ pairs of vertices
(with no two pairs having a common vertex) from $V_d$ such
that each pair is at a distance $d$ from each other. Let $V'_d$ be such a set. \\
\indent \indent $V_d \leftarrow V_d\setminus{V'_d}$\\
\indent \indent $\Delta$ = {\bf{ParallelCompute}}$(A, D, \Lambda, V'_d)$ \\
return $\Delta$ \\
\noindent \line(1,0){225} \\

\noindent {\em{Correctness}} : {\bf{ParallelCompute}} computes the
dependencies of (at most $n/2$ pairs of) vertices (such that each
pair of vertices are at a distance of $d$ from each other) in
parallel. This can be done in $O(\log{k})$ time, since this involves
computing sum of $k$ values. When there are multiple vertices at
distance $d$ (from a vertex $v$) the algorithm is repeated until all
the pairs's dependencies are calculated. Note that there can be at
most $O(maxdeg(G))$ such nodes, where $maxdeg(G)$ is the maximum degree
of any vertex in the graph. Hence {\bf{ParallelCompute}} takes
$O(maxdeg(G)\log{k}) = O(\log{m})$ time (since we are interested in
{\em{bounded-degree}} graphs). Since there are at most $n$ different
distances and $O(n^2)$ pairs of dependencies to be computed,
{\bf{ParallelCompute}} is called at most $O(n)$ times. The constant
in $O(n)$ depends on the distribution of ($n$ possible) distances
among the $O(n^2)$ pairs of vertices. Note that this gives an
{\em{optimal}} algorithm.

\noindent \line(1,0){225} \\
\noindent ${\bf{Parallel Compute}}(A, D, \Lambda, V'_d)$ \\
Let $m$ processors represent the edges. \\
For each pair $u,v \in V'_d$ (such that $d(u,v) = d$) do the following in $parallel$ \\
\indent $\circ$ Let $w_1, w_2, w_3, \dots, w_k$ be the
vertices such that $v \in P_u(w_i)$.\\ \\
\indent $\circ$ The processor representing edge $(v,w_i)$
calculates
$\frac{1}{\lambda_{uw_i}}(1+\delta_{u*}(w_i))$.\\ \\
\indent $\circ$ The $k$ processors (representing the edges
($v,w_i$)) compute the sum $\displaystyle\sum_{i=1}^{k}(1+\delta_{u*}(w_i))$.\\ \\
\indent $\circ$ This sum is multiplied by $\lambda_{uv}$
and stored in the shared memory as $\delta_{uv}$.\\
\indent $\circ$ Compute $\delta_{vu}$ similarly.\\
\indent $\circ$ If there are multiple vertices at distance $d$ from $v$ then repeat the algorithm {\bf{Parallel Compute}} for the remaining pairs of vertices.\\
return $\Delta$

\noindent \line(1,0){225} \\

\begin{theorem}
The dependencies in an $unweighted$ graph can be computed in $O(n\log{m})$ time using $O(m)$ processors.
\end{theorem}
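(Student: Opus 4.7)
The plan is to account for the running time of ComputeDependency by bounding (i) the cost of one invocation of ParallelCompute and (ii) the total number of invocations across the outer loop, and then to verify correctness via Brandes's recurrence.

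For (i), I first observe that each batch consists of at most $n/2$ vertex-disjoint pairs $(u,v)$ with $d(u,v)=d$, so the $m$ edge-processors can be partitioned among the pairs without contention. Inside a single pair, the inner sum ranges over the successors $w_1,\dots,w_k$ of $v$ on shortest paths from $u$; each $w_i$ must be a neighbor of $v$, hence $k\le\mathrm{maxdeg}(v)$. The $k$ partial contributions are formed in one parallel step and then assembled by a standard binary-tree reduction in $O(\log k)$ rounds. Because the graph is bounded-degree, $k=O(1)$, giving a per-batch cost of $O(\log\mathrm{maxdeg}(G))=O(\log m)$. Write contention on the output cells $\delta_{uv},\delta_{vu}$ is ruled out by the vertex-disjointness of the batch.

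For (ii), let $p_d$ denote the number of unordered vertex pairs at distance $d$. The inner While loop for a fixed $d$ executes $\lceil 2p_d/n\rceil\le 1+2p_d/n$ iterations, since each batch consumes at most $n/2$ of the $p_d$ pairs. Summing over the at most $n$ non-empty distance values,
\[
\sum_d\Bigl(1+\tfrac{2p_d}{n}\Bigr)\;\le\;n+\tfrac{2}{n}\binom{n}{2}\;=\;O(n).
\]

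For correctness, I would invoke Lemma 3.3 as the base case, seeding $\Delta_{\mathrm{Diam}(G)}=\mathbf{0}$, and then Theorem~\ref{thm:brandes}, which expresses $\delta_{u*}(v)$ at distance $d$ purely in terms of values $\delta_{u*}(w)$ at strictly larger distance. Because the outer loop on $d$ is descending, every $\delta_{u*}(w_i)$ read by ParallelCompute has already been written to shared memory, so the at-most-$n/2$ pairs within a batch can be processed in arbitrary order and in parallel. Combining (i) and (ii), the total cost is $O(n)\cdot O(\log m)=O(n\log m)$ time on $O(m)$ processors, as claimed. The main obstacle is the accounting in step (ii): one has to notice that although a single distance class may force $\Theta(n)$ sequential batches on its own, the global constraint $\sum_d p_d\le\binom{n}{2}$ collapses the grand total of batches to linear in $n$ rather than the naive $O(n^2/n)$ per distance.
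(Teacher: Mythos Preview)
Your two-step outline (bound one \textbf{ParallelCompute} call, then bound the number of calls) is exactly the paper's structure, and your correctness paragraph via Lemma~3.3 and Theorem~\ref{thm:brandes} matches the paper's justification.

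The gap is in step (ii). From ``each batch consumes \emph{at most} $n/2$ of the $p_d$ pairs'' you can only conclude that the number of batches is \emph{at least} $\lceil 2p_d/n\rceil$, not at most. The vertex-disjointness requirement on $V'_d$ means a batch is a matching in the graph $H_d$ whose edges are the distance-$d$ pairs; if some vertex $v$ lies in many such pairs, every batch can include at most one of them, so the number of batches for that $d$ is governed by $\Delta(H_d)$, not by $p_d/(n/2)$. Your global sum $\sum_d p_d\le\binom{n}{2}$ therefore does not, by itself, collapse the total to $O(n)$, and notably your step (ii) never uses the bounded-degree hypothesis at all.

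This is precisely where the paper's argument differs from yours. The paper invokes bounded degree a second time, asserting that for any fixed $v$ the number of vertices at distance $d$ from $v$ is $O(\mathrm{maxdeg}(G))$; it uses this to cap the number of repetitions caused by a single vertex at $O(\mathrm{maxdeg}(G))$, folds that factor into the per-call cost $O(\mathrm{maxdeg}(G)\log k)=O(\log m)$, and only then appeals to the ``$n$ distances, $O(n^2)$ pairs $\Rightarrow O(n)$ calls'' counting. So to align with the paper you would need to insert a bounded-degree argument controlling how often a single vertex can recur across batches at a fixed distance, rather than relying on batch \emph{size} alone.
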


For weighted graphs with integer weights taken from the range $\{1,2,\dots,M\}$, the distances vary from $nM$ to $1$.

\begin{theorem}
The dependencies in a $weighted$ graph with integer weights taken from the range $\{1,2,\dots,M\}$, can be computed in $O(Mn\log{m})$ time using $O(m)$ processors.
\end{theorem}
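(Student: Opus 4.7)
The plan is to rerun the same \textbf{ComputeDependency}/\textbf{ParallelCompute} scheme used for the unweighted bounded-degree case, changing only the range of the outer loop: iterate $d$ from $nM$ down to $1$ instead of from $n$ down to $1$. Since every edge weight lies in $\{1,\dots,M\}$ and any shortest path uses at most $n-1$ edges, every finite distance $d(u,v)$ is an integer in $\{1,\dots,(n-1)M\}$, so this range is exhaustive.

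For correctness I would point out that Brandes's recursion (Theorem~1.1) and Lemma~3.3 both hold verbatim in the weighted setting: if $v \in P_s(w)$ then $d(s,w) = d(s,v) + w(v,w) > d(s,v)$, so the dependency $\delta_{s*}(v)$ is always computed from dependencies $\delta_{s*}(w)$ at strictly larger distance from $s$. Processing pairs in decreasing order of distance therefore ensures that whenever \textbf{ParallelCompute} is invoked on a pair at distance $d$, all summands on the right-hand side of Brandes's recursion are already stored in shared memory. Hence the correctness proof is structurally identical to the unweighted case and I would appeal to it directly.

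For the time bound I would analyze the two sources of calls to \textbf{ParallelCompute} separately. Each call still processes up to $n/2$ disjoint pairs and runs in $O(\mathrm{maxdeg}(G)\log k) = O(\log m)$ time on bounded-degree graphs. The outer loop contributes one (possibly empty) iteration per distance value, so $O(nM)$ calls, while charging each of the $O(n^2)$ pairs of vertices to the batch that resolves it contributes an additional $O(n^2/(n/2)) = O(n)$ calls. Summing yields $O(nM)+O(n) = O(nM)$ invocations of \textbf{ParallelCompute}, for total time $O(Mn\log m)$ using the same $O(m)$ processors (one per edge) as in the unweighted version.

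The main obstacle I anticipate is simply arguing that blowing up the distance range from $n$ to $nM$ does not inflate the per-iteration cost, only the number of iterations. In particular, one must check that the selection of up to $n/2$ disjoint pairs at a given distance and the parallel sum over the at-most-$\mathrm{maxdeg}(G)$ successor edges are both unaffected by the weights, because neither step inspects edge weights except through the already-computed matrices $D$ and $\Lambda$ supplied by the forward pass. Once this is observed, the bound follows and the theorem is established.
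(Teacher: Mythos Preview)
Your proposal is correct and is exactly the argument the paper has in mind: the paper's entire justification for this theorem is the single sentence ``For weighted graphs with integer weights taken from the range $\{1,2,\dots,M\}$, the distances vary from $nM$ to $1$,'' after which the theorem is stated without further proof. Your expansion---extending the outer loop range to $nM$, invoking Brandes's recursion and Lemma~3.3 unchanged, and accounting for $O(nM)$ invocations of \textbf{ParallelCompute} at $O(\log m)$ each on bounded-degree graphs---supplies the details the paper omits and matches its approach precisely.
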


\section{Open Problems}

\begin{enumerate}

\item{Is there an algorithm to compute (exactly or approximately) the betweenness
of all (or even top $k$) vertices in {\em{sub-cubic}} (or $o(mn)$) time ?}

\item{Since the networks of interest are huge and dynamic, it is expensive to recompute betweenness for every addition/deletion of edge. Is there a fully dynamic algorithm to maintain betweenness in $O(n^2)$ amortized time per update (edge insertion or deletion), using only $O(n^2)$ space. Here, it is crucial to observe that betweenness centrality of {\em{all}} vertices can be changed by deleting (hence adding) a single edge to the graph. For example, let $C_{4k+1}$ be a cycle on $4k + 1$ vertices. The centrality of any vertex in  $C_{4k+1}$ is $k^2$. Removing an edge from $C_{4k+1}$ results in a path $P_{4k+1}$ on $4k+1$ vertices. Betweenness of vertices of $P_{4k+1}$ are $0, 4k-1, 2(4k-2), \dots, 4k^2, \dots, 2(4k-2), 4k-1, 0$.}

\item{Betweenness centrality implicitly assumes that
communications in the network use {\em{shortest}} paths. Shortest paths are sensitive to {\em{local}} changes (addition/deletion of edges). One possible way to address this issue is to consider $\delta$-stretch paths, instead of shortest paths \cite{wmc}. A $\delta$-stretch path is a path from $s$ to $t$ of length $\leq (1+\delta)d(s,t)$. What is the complexity of computing betweenness
based on $\delta$-stretch paths ?}

\item{Our conjectures mentioned in Section \ref{sec:conj} are open.}

\end{enumerate}

\vspace{0.2in}

\noindent {\large{\bf{Acknowledgements}}} \\ \\
This project is funded by ARC (Algorithms and Randomness Center) of the College of Computing at Georgia Institute of Technology.

\bibliographystyle{plain}
\bibliography{bibonbc}

\end{document}